\definecolor{defblue}{rgb}{0.121,0.47,0.705}
\definecolor{linkblue}{rgb}{0.082,0.310,0.537}
\newcommand{\mynumber}[1]{\ding{\fpeval{191+#1}}} 
\let\emph\relax
\DeclareTextFontCommand{\emph}{\color{defblue}\em}
\newtheorem{property}[theorem]{Property}
\Crefname{figure}{Figure}{Figures}
\crefname{proposition}{Proposition}{Propositions}
\Crefname{observation}{Observation}{Observations} %
\crefname{observation}{Observation}{Observations} %
\crefname{property}{Property}{Properties}
\newcommand{\fix}{fix\xspace}
\newcommand{\flex}{flex\xspace}
\newcommand{\true}{{\em true}\xspace}
\newcommand{\false}{{\em false}\xspace}
\newcommand{\forestStory}{\textsc{Forest Storyplan}\xspace}
\newcommand{\outerStory}{\textsc{Outerplanar Storyplan}\xspace}
\newcommand{\planarStory}{\textsc{Planar Storyplan}\xspace}
\title{Outerplanar and Forest Storyplans}
\titlerunning{Outerplanar and Forest Storyplans}
\author{Ji\v{r}\'{i} Fiala}{Charles University, Prague, Czech Republic}{fiala@kam.mff.cuni.cz}%
{https://orcid.org/0000-0002-8108-567X}{}
\author{Oksana Firman}{Institut für Informatik, Universität Würzburg,
	Germany}{oksana.firman@uni-wuerzburg.de}%
{https://orcid.org/0000-0002-9450-7640}{}
\author{Giuseppe Liotta}{Universit\`a degli Studi di Perugia, Perugia, Italy}{giuseppe.liotta@unipg.it}%
{https://orcid.org/0000-0002-2886-9694}{Supported by MUR PON project ARS01 00540 and by MUR PRIN project 2022TS4Y3N.}
\author{Alexander Wolff}{Institut für Informatik, Universität Würzburg, Germany}{alexander.wolff@uni-wuerzburg.de}%
{https://orcid.org/0000-0001-5872-718X}{}
\author{Johannes Zink}{TUM School of Computation, Information and Technology, Technische Universität München, Germany}{johannes.zink@tum.de}%
{https://orcid.org/0000-0002-7398-718X}{}
\authorrunning{J.~Fiala, O.~Firman, G.~Liotta, A.~Wolff, and J.~Zink} 
\keywords{Planar storyplans, outerplanar storyplans, forest storyplans, \NP-hard, \FPT}
\begin{document}
	\maketitle
	
	\pdfbookmark[1]{Abstract}{Abstract} 
\begin{abstract}
We study the problem of gradually representing a complex graph as a
sequence of drawings of small subgraphs whose union is the complex
graph.  The sequence of drawings is called \emph{storyplan}, and each
drawing in the sequence is called a \emph{frame}.  In an (outer)planar
storyplan, every frame is (outer)planar; in a forest storyplan, every
frame is acyclic.  It is known that every graph of treewidth at most~3
admits a planar storyplan and that deciding whether a given graph
admits a planar storyplan is \NP-complete [Binucci et al., JCSS 2024].

We first prove that deciding whether a given graph admits an
outerplanar storyplan (or a forest storyplan) is \NP-complete.  Then,
we show that the \FPT\ algorithms of Binucci et al.\ [2024] also work
for our problem variants with small modifications.  We identify graph
families that admit outerplanar and forest storyplans and families for
which such storyplans do not always exist.  In the affirmative case,
we present efficient algorithms that produce straight-line storyplans.
\end{abstract}
	
\section{Introduction}
\label{sec:introduction}

A possible approach to the visual exploration of large and complex
networks is to gradually display them by showing a sequence of
\emph{frames}, where each frame contains the drawing of a portion of
the graph.  When going from one frame to the next, some vertices and
edges appear while others disappear. To preserve the mental map, the
geometric representation of vertices and edges that are shared by two
consecutive frames must remain the same.  Informally speaking,
a~\emph{storyplan} for a graph consists of a sequence of frames such
that every vertex and edge of the graph appears in at least one frame.
Moreover, there is a consistency requirement (as for the labels in a
zoomable digital map \cite{bnpw-oarcd-CGTA10}): once a vertex
disappears, it may not re-appear.  Hence, after a vertex appears, it
remains visible until all its incident edges are represented; then it
disappears in the transition to the next frame.  See
\cref{fig:peterson} for a storyplan.

\begin{figure}[thb]
	\centering \includegraphics{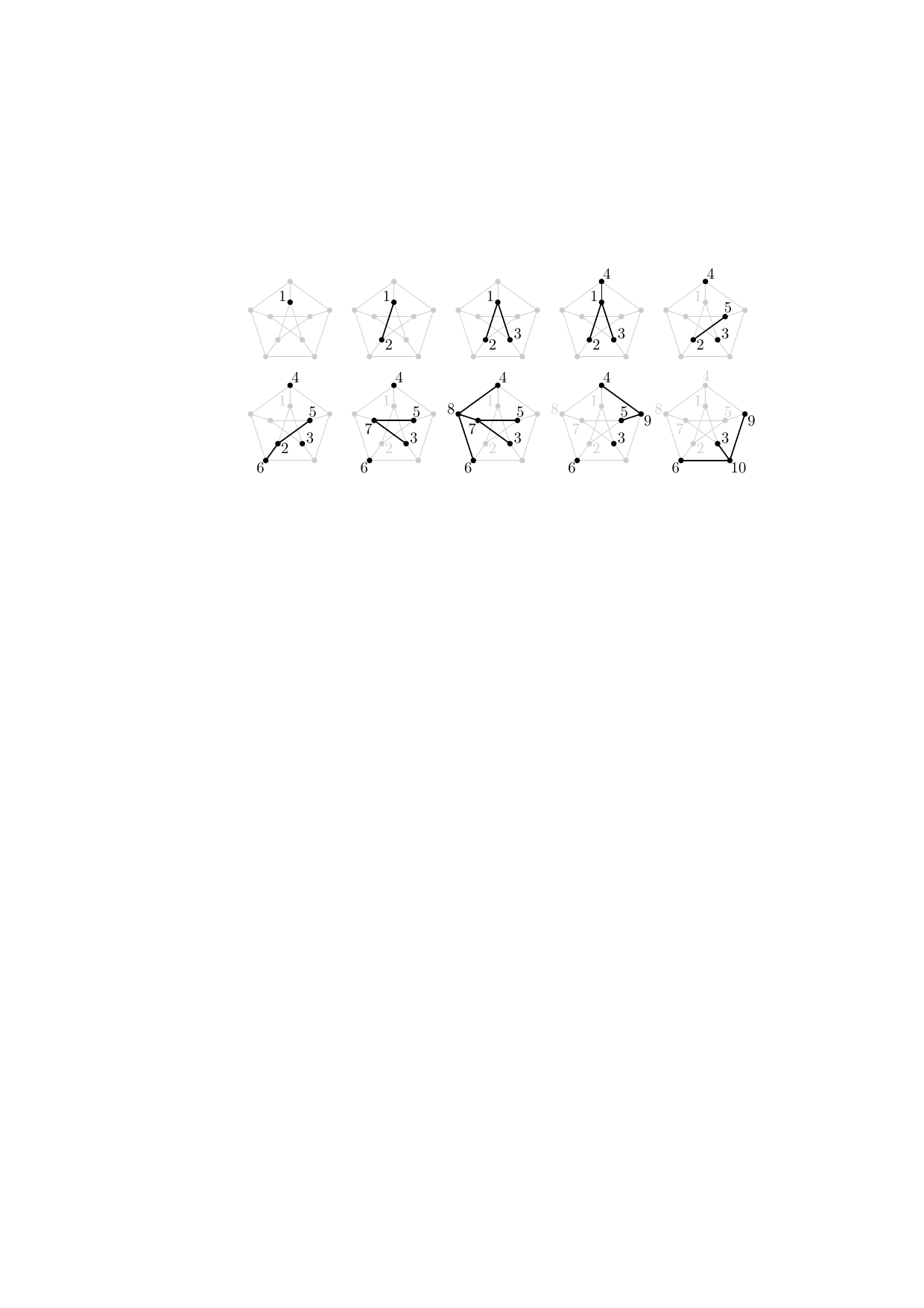}
	\caption{A forest storyplan of the Petersen graph.}
	\label{fig:peterson}
\end{figure}

Since edge crossings are a natural obstacle to the readability of a
graph layout~\cite{DBLP:books/ph/BattistaETT99}, Binucci
at al.~\cite{bdllmms-csp-GD22,bdllmms-csp-JCSS24}
introduced and studied the \emph{planar
	storyplan problem} that asks whether a graph $G$ admits a storyplan
such that every frame is a crossing-free drawing and in every frame a single
new vertex appears. We call this problem \planarStory.
Binucci et al.\ showed that \planarStory is \NP-complete in general and
fixed-parameter tractable with respect to the vertex
cover number and with respect to the feedback edge set number.
They also proved that every graph of treewidth at
most~3 admits a planar storyplan.

Motivated by the research of Binucci et al., we forward the idea of
representing a graph with a storyplan such that each frame is a
drawing whose visual inspection is as simple as possible.
Specifically, we study the \emph{outerplanar storyplan problem} and
the \emph{forest storyplan problem}, which are defined analogously to
the planar storyplan problem (see \cref{de:storyplan} below).
We call the problems \outerStory and \forestStory, respectively.
We let the classes of graphs that admit planar, outerplanar and
forest storyplans be denoted by $\mathcal{G}_\mathrm{planar}$,
$\mathcal{G}_\mathrm{outerpl}$, and $\mathcal{G}_\mathrm{forest}$,
respectively.  Clearly, $\mathcal{G}_\mathrm{forest} \subseteq
\mathcal{G}_\mathrm{outerpl} \subseteq \mathcal{G}_\mathrm{planar}
\subseteq \mathcal{G}$, where $\mathcal{G}$ is the class of all
graphs. To
further simplify visual inspection, our algorithms draw all frames
with straight-line edges.  We call storyplans with this property
\emph{straight-line storyplans}.

Beside the work of Binucci et al., our research relates to the graph
drawing literature that assumes either dynamic or streaming models
(see, e.g.,
\cite{DBLP:conf/iv/AbdelaalLHW20,DBLP:journals/ipl/BinucciBBDGPPSZ12,DBLP:journals/jvis/Burch17,DBLP:journals/tcs/LozzoR19})
and to recent work about graph stories (see, e.g.,
\cite{ddggopt-spssgs-GD22,DBLP:journals/jgaa/BorrazzoLBFP20}).  The
key difference to our work is that these papers
(except~\cite{bdllmms-csp-JCSS24}) assume that the order of the vertices
is given as part of the input.  We now summarize our contribution,
using \emph{$\triangle$-free} as shorthand for triangle-free.  

\begin{itemize}
	\item We establish the chain of strict containment relations
	$\mathcal{G}_\mathrm{forest} \subsetneq \mathcal{G}_\mathrm{outerpl}
	\subsetneq \mathcal{G}_\mathrm{planar} \subsetneq \mathcal{G}$ (see
	\cref{fig:diagram}) by showing that
	\begin{itemize}
		\item there is a $\triangle$-free 6-regular graph that does not admit a
		planar storyplan;
		\item there is a $K_4$-free 4-regular planar graph that (trivially)
		admits a planar storyplan, but does not admit an outerplanar
		storyplan; and
		\item there is a $\triangle$-free 4-regular (nonplanar) graph that
		admits an outerplanar storyplan, but does not admit a forest
		storyplan.
	\end{itemize}
	Recall that a \emph{triangulation} is a maximal planar graph; it
	admits a planar drawing where every face is a triangle.
	We show that no triangulation (except for~$K_3$)
	admits an outerplanar storyplan; see \cref{sec:negative}.
    \item We show that both problems \outerStory and \forestStory
	are \NP-complete; \cref{sec:complexity}.
    \item Further we modify the \FPT\ algorithms
	of Binucci et al.~\cite{bdllmms-csp-JCSS24} for \planarStory in
	order to obtain \FPT\ algorithms for \outerStory	and for
	\forestStory; see \cref{sec:fpt}.
	\item We show that every partial 2-tree and every subcubic graph
	except~$K_4$ admits an outerplanar straight-line storyplan (in
	linear time); see \cref{sec:outer}.  In our construction for subcubic
	graphs, every frame contains at most five~edges.
	\item A graph must be $\triangle$-free in order to admit a forest
	storyplan.  We show that $\triangle$-free subcubic graphs (as the
	Petersen graph in \cref{fig:peterson}), and $\triangle$-free planar
	graphs admit straight-line forest storyplans (which we can compute
	in linear and polynomial time, respectively); see \cref{sec:forest}.
\end{itemize}

\begin{figure}[tb]
	\centering
	\includegraphics{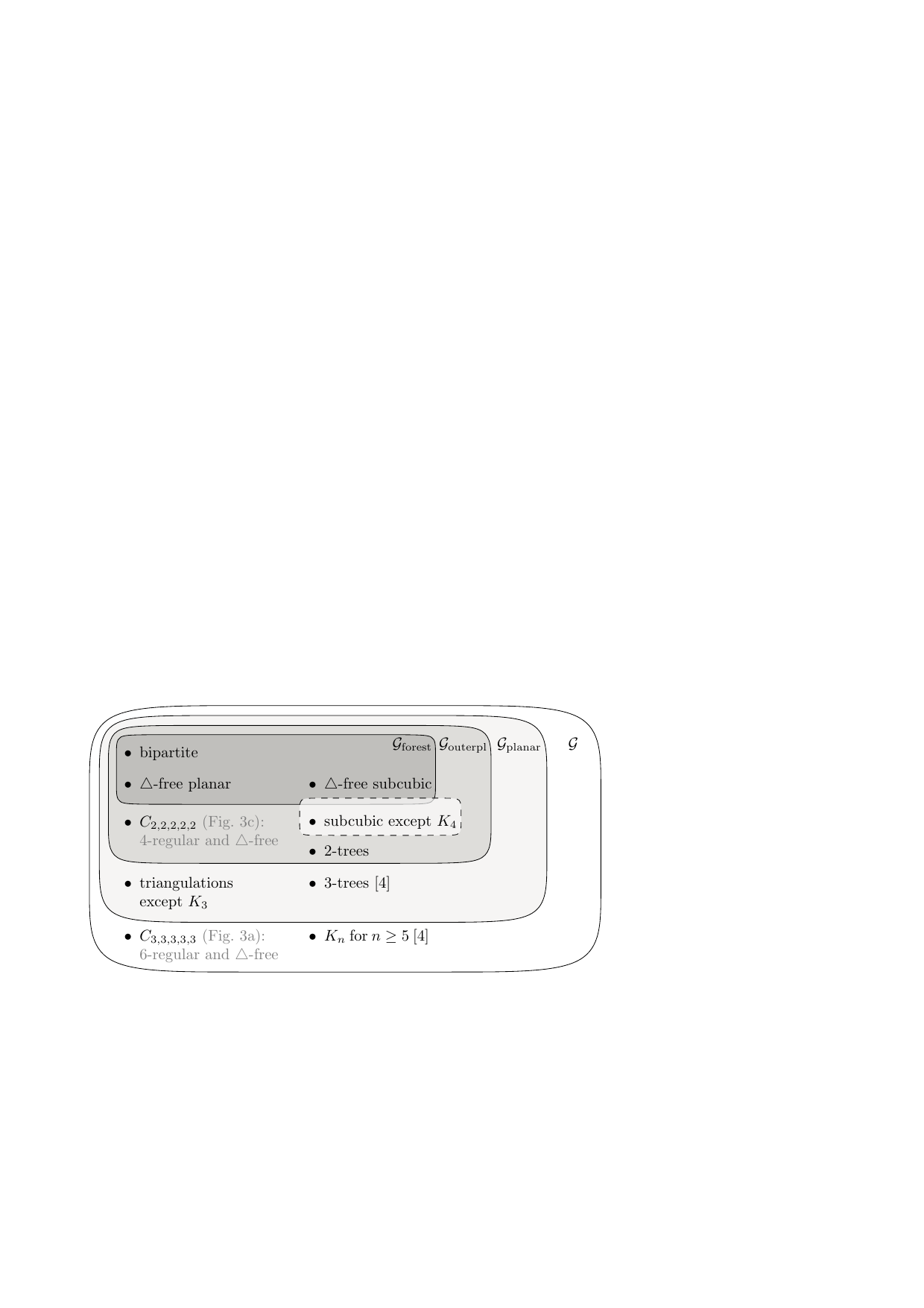}
	\caption{Overview: existing \cite{bdllmms-csp-JCSS24} and new
		storyplan results, implying
		$\mathcal{G}_\mathrm{forest} \subsetneq
		\mathcal{G}_\mathrm{outerpl} \subsetneq
		\mathcal{G}_\mathrm{planar} \subsetneq \mathcal{G}$.  (For
		simplicity, we mention 2-/3-trees rather than partial
		2-/3-trees.)}
	\label{fig:diagram}
\end{figure}

We start with some preliminaries in \cref{sec:preli} and close with
open problems in \cref{sec:open}.  Given a positive
integer $n$, we use \emph{$[n]$} as shorthand for $\{1,2,\dots,n\}$.
Given a graph~$G$, we let \emph{$V(G)$} denote the vertex set of~$G$,
and we let \emph{$E(G)$} denote the edge set of~$G$.

\section{Preliminaries}
\label{sec:preli}

Our definitions of a planar, an outerplanar, and a forest storyplan
are based on the definition of a planar storyplan of Binucci et
al.~\cite{bdllmms-csp-JCSS24}.

\begin{definition}\label{de:storyplan}
	A \emph{planar storyplan}
	$\mathcal{S}=\langle \tau, (D_i)_{i\in [n]} \rangle$ of $G$ is a
	pair defined as follows.  The first element is a bijection
	$\tau \colon V \rightarrow [n]$ that represents a total order of the
	vertices of $G$.  For a vertex $v \in V$, let $i_v= \tau(v)$ and let
	$j_v = \max_{u \in N[v]} \tau(u)$, where $N[v]$ is the set
	containing $v$ and its neighbors.  The interval $[i_v,j_v]$ is the
	\emph{lifespan} of~$v$.  We say that $v$ \emph{appears} at step
	$i_v$, is \emph{visible} at step~$i$ for each $i \in [i_v,j_v]$, and
	\emph{disappears} at step $j_v+1$.  Note that a vertex disappears
	only when all its neighbors have appeared.  The second element of
	$\mathcal{S}$ is a sequence of drawings $(D_i)_{i\in [n]}$, called
	\emph{frames} of~$\mathcal{S}$, such that, for $i \in [n]$:
	(i)~$D_i$ is a drawing of the graph $G_i$ induced by the vertices
	visible at step~$i$, (ii)~$D_i$ is planar, (iii)~the point
	representing a vertex $v$ is the same over all drawings that contain
	$v$, and (iv)~the curve representing an edge $e$ is the same over
	all drawings that contain~$e$.
\end{definition}

We emphasize that though in our definition of storyplans we allow
that edges can be represented by curves, our constructions use only
straight-line edges.
For an \emph{outerplanar storyplan} and a \emph{forest storyplan}, we
strengthen requirement~(ii) to $D_i$ being outerplanar and $D_i$ being
a crossing-free drawing of a forest, respectively.  In what follows, we will sometimes use a slight
variant of \cref{de:storyplan}, in which we enrich the sequence
$(D_i)_{i\in [n]}$ of frames by explicitly representing the portions
of the drawings that consecutive frames have in common.  More
precisely, for $i \in [n-1]$, let $D_i'=D_i \cap D_{i+1}$.  Then, a
storyplan is a sequence of drawings
$\langle D_1, D'_1, \dots, D_{n-1}, D'_{n-1}, D_n\rangle$, where in
each step~$i < n$, we first introduce a vertex (in~$D_i$) and then
remove all \emph{completed} vertices (in~$D_i'$), that is, the
vertices that disappear in the next step.  For $i \in [n-1]$,
we define (similarly to~$D_i'$) $G_i'$ as the graph induced by the vertices
of the graphs~$G_i$ and~$G_{i+1}$.  We now list some useful observations.

\begin{property}\label{prop:partial-graph}
	If a graph $G$ admits a planar, an outerplanar, or a forest
	storyplan, then the same holds for any subgraph of~$G$.
	Conversely, if a graph $G$ does not admit a planar, an outerplanar,
	or a forest storyplan, then the same holds for all supergraphs of $G$.
\end{property}

\begin{lemma}[\cite{bdllmms-csp-JCSS24}]\label{le:bipartite-3}
	Let $K_{a,b} = (A \cup B, E)$ be a complete bipartite graph with
	$a = |A|$, $b=|B|$, and $3 \le a \le b$.  Let
	$\mathcal{S}=\langle \tau, \{D_i\}_{i\in [a+b]} \rangle$ be a planar
	storyplan of $K_{a,b}$.  Exactly one of $A$ and $B$ is such that all
	its vertices are visible in some frame $i \in [a+b]$.
\end{lemma}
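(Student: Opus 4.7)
The plan is to prove two directions: (a) \emph{at most one} of the two sides can ever be entirely visible simultaneously, and (b) \emph{at least one} of them must be entirely visible at some step.

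For (a), suppose for contradiction that at some step~$i$ every vertex of~$A$ and every vertex of~$B$ is visible.  Then the graph $G_i$ induced by the visible vertices is a complete bipartite graph $K_{a',b'}$ with $a' = a \ge 3$ and $b' = b \ge 3$, which contains $K_{3,3}$ as a subgraph and is therefore non-planar.  Since $D_i$ is required to be a planar drawing of $G_i$, this is impossible.  Hence at most one of $A$, $B$ can have all its vertices visible simultaneously.

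For (b), let $t_A = \max_{a \in A} \tau(a)$ and $t_B = \max_{b \in B} \tau(b)$; since $\tau$ is a bijection, $t_A \ne t_B$.  Assume without loss of generality that $t_A < t_B$ (the other case is symmetric).  I claim that at step $t_A$ every vertex of $A$ is visible.  Indeed, pick any $a \in A$.  By definition of $t_A$, we have $i_a = \tau(a) \le t_A$, so $a$ has already appeared.  On the other hand, $N[a] \supseteq B$, so $j_a = \max_{u \in N[a]} \tau(u) \ge t_B > t_A$, meaning $a$ has not yet disappeared.  Thus $a$ is visible at step $t_A$ for every $a \in A$, establishing the existence of a frame in which all of $A$ is visible.

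Combining (a) and (b) yields the ``exactly one'' conclusion.  The only real content is the non-planarity of $K_{3,3}$ in part~(a); the order-theoretic argument in part~(b) is straightforward from the definition of lifespan.  I do not anticipate a genuine obstacle—the main care needed is just to observe that the induced subgraph at step $t_A$ really contains all of $A$ (because no $a \in A$ can have its lifespan end before $t_B$, since every vertex of $B$ is a neighbor of $a$).
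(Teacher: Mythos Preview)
Your proof is correct and complete. Note, however, that the paper does not actually supply its own proof of this lemma: it is quoted from Binucci et al.~\cite{bdllmms-csp-GD22} and used as a black box in the proof of \cref{thm:negative}. Your argument is the standard one: part~(a) is immediate from the non-planarity of $K_{3,3}$, and part~(b) follows because the side whose last vertex appears earlier must be entirely visible at that moment, since every one of its vertices still has an unintroduced neighbor on the other side and hence cannot yet have disappeared.
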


\begin{example}[Bipartite graphs]%
	\label{rem:bipartite}
	Every bipartite graph admits a forest storyplan: first add all
	vertices of one set of the bipartition and then, one by one, the
	vertices of the other set.  Note that each vertex of the second set
	is visible in only one frame.
\end{example}

Now we modify \cref{le:bipartite-3} in order to obtain a
property of outerplanar (forest) storyplan of complete
bipartite graphs. The proof of \cref{le:bipartite-small}
is analogous to the proof of \cref{le:bipartite-3}
(see \cite{bdllmms-csp-JCSS24}).

\begin{lemma}
	\label{le:bipartite-small}
	Let $K_{a,b} = (A \cup B, E)$ be a complete bipartite graph with
	$a = |A|$, $b=|B|$, and $3 \le a \le b$ ($2 \le a \le b$).  Let
	$\mathcal{S}=\langle \tau, \{D_i\}_{i\in [a+b]} \rangle$ be an
	outerplanar (forest) storyplan of $K_{a,b}$.
	One of partition $A$ and $B$ is such that all
	its vertices are visible at some $i \in [a+b]$ and no two
	vertices from the other partition are visible at any step
	$j \in [a+b]$.
\end{lemma}

\begin{proof}
	First we show that there exists a frame, where $A$ or $B$ is
	completely visible. Recall that $i_v \in [n]$ is a step
	where a	vertex $v$ appears and $j_v \in [n]$ with $i_v \le j_v$
	is a step where $v$ disappears.
	Let $i$ be such that $D_i$ contains the largest number $t$ of
	vertices of $A$ over all frames of $\mathcal{S}$. 
	If $t=a$, we are done. 
	If $t<a$, there exist two vertices $u,v$ of $A$ such that
	$j_u < i_v$. Note that all vertices in $B$ are adjacent to $u$,
	and hence they all appear at some step smaller than or equal to
	$j_u$. On the other hand,  since all vertices in $B$ are
	adjacent to $v$ as well, they cannot disappear before $i_v+1$.
	It follows that all vertices of $B$ are visible at step $j_u$.
	
	Now we show that no two vertices from one of the partitions
	$A$ and $B$ are visible at any frame $j \in [a+b]$.
	Consider the interval $I=[s,t] \subseteq [a+b]$ of maximal
	length such that the vertices of $A$ or $B$, say $A$, are all
	visible. As we have already shown, $I$ is not empty.
	Let $k$ be one of the steps that contain the largest number $b$
	of vertices of $B$. Observe that, since $I$ is maximal, one
	vertex of $A$ appears at $s$. Therefore, any vertex of $B$
	visible at a step smaller than $s$ is visible also at step $s$.
	Similarly, by the maximality of $I$, one vertex of $A$
	disappears at step $t+1$, therefore any vertex of $B$ visible at
	a step greater than $t$ is visible also at step $t$.
	Consequently, we can assume that $k \in I$, and we can conclude
	$b \le 1$, since it is not possible to have three (two) vertices
	from one partition and two vertices from the other partition at
	any	frame of the outerplanar (forest) storyplan $\mathcal{S}$.
\end{proof}

Note that if one of the partition $A$ or $B$ has exactly two
vertices, at most two vertices of the other partition are visible
as some frame of an outerplanar storyplan.

\section{Separation of Graph Classes}
\label{sec:negative}

Trivially, triangulations admit planar storyplans, but as we show now,
no triangulation (except for $K_3$) admits an outerplanar storyplan.

\begin{theorem}
	\label{thm:triangulations}
	No triangulation (except for $K_3$) admits an outerplanar storyplan.
\end{theorem}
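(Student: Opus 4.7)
The plan is to focus on the final frame $D_n$ of any purported outerplanar storyplan of a triangulation $G$ on $n \ge 4$ vertices and derive a contradiction from the structure of the closed neighborhood of the last vertex to appear.

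First, I would identify which vertices remain visible in $D_n$. Let $v$ be the unique vertex with $\tau(v)=n$. A vertex $u$ is visible at step $n$ iff $i_u \le n \le j_u$; since $v$ is the only vertex with $\tau$-value $n$, the upper bound $j_u \ge n$ forces $v \in N[u]$. Hence the visible vertices at step $n$ are exactly $N[v]$, and $D_n$ is a drawing of the induced subgraph $G[N[v]]$.

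Next, I would invoke a standard structural property of triangulations: for any vertex $v$ in a triangulation on at least $4$ vertices, the open neighborhood $N(v)$ induces a cycle in $G$. This is because every face incident to $v$ is a triangle and these triangles glue cyclically around $v$, so consecutive neighbors in the rotation around $v$ are adjacent while non-consecutive ones are not. Consequently, $G[N[v]]$ is exactly the wheel $W_{\deg(v)}$, with $\deg(v) \ge 3$ (since every vertex of a triangulation on $\ge 4$ vertices has degree at least~$3$).

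The final step is to show that $W_k$ is not outerplanar for any $k \ge 3$. This follows from the edge bound for outerplanar graphs: an outerplanar graph on $m$ vertices has at most $2m-3$ edges. But $W_k$ has $k+1$ vertices and $2k$ edges, and $2k > 2(k+1)-3 = 2k-1$. Hence $D_n$ cannot be outerplanar, contradicting the assumption that $\mathcal{S}$ is an outerplanar storyplan. By \cref{prop:partial-graph}, this rules out outerplanar storyplans for every triangulation on at least $4$ vertices.

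The proof is short once the right vertex (the last one) is isolated; the only step that requires a little care is the structural claim that $N(v)$ induces a cycle in a triangulation, but this is a textbook fact and could be cited rather than reproved.
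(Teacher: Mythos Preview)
Your argument is essentially the paper's: both observe that the closed neighborhood of some vertex---a wheel---must be entirely visible in one frame, which therefore cannot be outerplanar; the paper picks the first vertex to \emph{disappear}, whereas you pick the last vertex to \emph{appear}, a symmetric and equally valid choice. One small correction: $G[N[v]]$ need not be \emph{exactly} the wheel $W_{\deg(v)}$ (non-consecutive neighbors of $v$ in the rotation may well be adjacent elsewhere in the triangulation), but it always contains $W_{\deg(v)}$ as a subgraph, which is all your edge-count argument needs; the final appeal to \cref{prop:partial-graph} is superfluous.
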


\begin{proof}
	For a triangulation, the closed neighborhood of each vertex induces
	a wheel, which is not outerplanar. For the first vertex that
	disappears according to a given storyplan, however, its whole closed
	neighborhood, which is not outerplanar, must be visible.
\end{proof}

\begin{example}[Platonic graphs]
	\label{rem:platonic}
	According to \cref{thm:triangulations}, the tetrahedron, the
	octahedron, and the icosahedron do not admit outerplanar storyplans
	because they are triangulations.  The cube is bipartite; hence, it
	admits a forest storyplan due to \cref{rem:bipartite}.  The
	dodecahedron is $\triangle$-free and cubic; hence, it admits a forest
	storyplan due to \cref{thm:subcubic}.  For an ordering of the vertices
	of the dodecahedron that corresponds to a forest storyplan, see
	\cref{fig:dodecahedron} in the appendix.
\end{example}

\begin{figure}[thb]
	\centering
	\includegraphics{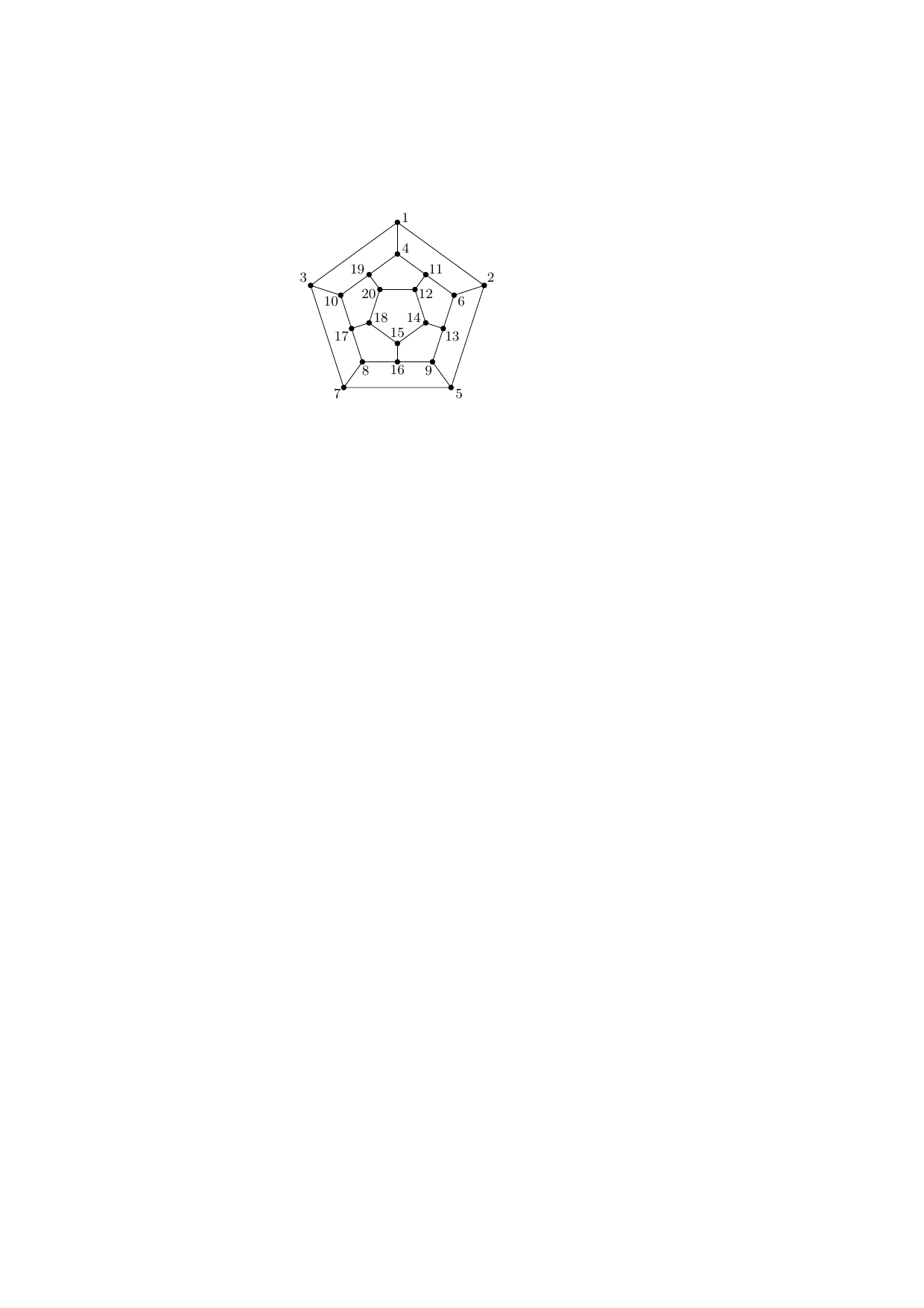}
	\caption{The dodecahedron with a vertex numbering that corresponds
		to a forest storyplan.}
	\label{fig:dodecahedron}
\end{figure}

We now separate the graph classes $\mathcal{G}_\mathrm{forest}$,
$\mathcal{G}_\mathrm{outerpl}$, $\mathcal{G}_\mathrm{planar}$, and
$\mathcal{G}$; see \cref{fig:diagram}.

\begin{theorem}
	\label{thm:negative}
	The following statements hold:
	\begin{enumerate}
		\item There is a $\triangle$-free 6-regular graph that does not admit a
		planar storyplan; hence
		$\mathcal{G}_\mathrm{planar} \subsetneq \mathcal{G}$.
		\item There is a $K_4$-free 4-regular planar graph that does not
		admit an outerplanar storyplan; hence
		$\mathcal{G}_\mathrm{outerpl} \subsetneq
		\mathcal{G}_\mathrm{planar}$.
		\item There is a $\triangle$-free 4-regular (nonplanar) graph that
		admits an outerplanar storyplan, but does not admit a forest
		storyplan; hence
		$\mathcal{G}_\mathrm{forest} \subsetneq
		\mathcal{G}_\mathrm{outerpl}$.
	\end{enumerate}
\end{theorem}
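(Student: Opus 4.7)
The theorem splits into three parts, one per separation; I would produce a separate explicit graph and argument for each.

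\textbf{Part~2} is the cleanest. I would pick the octahedron $K_{2,2,2}$: it is $4$-regular and planar, and is $K_4$-free because any two vertices in the same part are non-adjacent, so no four vertices are pairwise adjacent. Since the octahedron is a triangulation (on six vertices), \cref{thm:triangulations} directly forbids an outerplanar storyplan, so this case reduces to invoking the earlier theorem.

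\textbf{Part~1} needs a $\triangle$-free $6$-regular graph admitting no planar storyplan. Since triangle-freeness rules out the ``$K_5$ minor from a small neighborhood'' route, the planarity obstruction at a bad frame must be a $K_{3,3}$ (sub)minor. My plan is to build $G$ out of several overlapping $K_{3,3}$ subgraphs arranged on a common $6$-regular $\triangle$-free skeleton (e.g.\ a bipartite-like construction, or a graph drawn from the incidence structure of a small design and padded for $6$-regularity). For each embedded $K_{3,3}=(A\cup B,E)$, \cref{le:bipartite-3} forces one of $A$, $B$ to be simultaneously visible at some step of any planar storyplan. I would then arrange the $K_{3,3}$ copies so that the ``heavy sides'' selected by \cref{le:bipartite-3} for two distinct copies must coexist in some frame, producing a visible $K_{3,3}$ subdivision and violating planarity of that frame. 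Verifying the $\triangle$-free and $6$-regular properties of the construction is direct; the substantive work is the combinatorial argument that two heavy sides unavoidably overlap.

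\textbf{Part~3} needs a non-planar $4$-regular $\triangle$-free graph $G$ with an outerplanar storyplan but no forest storyplan. By \cref{rem:bipartite} every bipartite graph admits a forest storyplan, so $G$ must be non-bipartite, hence of odd girth at least~$5$. I would therefore look among small non-bipartite $\triangle$-free $4$-regular graphs (circulants $C_n(1,k)$, or a Petersen-style construction with one extra ``twisted'' matching). For forest non-existence, I would argue that for every ordering~$\tau$, when the first vertex $v$ disappears, its four visible neighbors together with vertices forced to be alive by the structure contain a cycle, so no $G_i$ can be acyclic throughout. For the outerplanar upper bound, I would exhibit a specific ordering and a straight-line drawing and check directly that every frame is outerplanar; because $4$-regularity keeps each frame small, the verification should be by inspection of a few frames.

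The hardest part is \textbf{Part~1}: unlike Parts~2 and~3, its lower bound is universal over all vertex orderings and all planar drawings, so there is no single ``bad frame'' to point at---one has to show that \cref{le:bipartite-3}, applied locally to each $K_{3,3}$ in~$G$, globally forces a forbidden co-occurrence. This overlap/counting step is where the construction of~$G$ really pays off and where I expect the bulk of the case analysis to lie.
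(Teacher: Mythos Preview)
Your Part~2 is exactly the paper's proof.

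Your Part~1 plan is the paper's plan, but you have overestimated its difficulty. The paper's graph is $C_{3,3,3,3,3}$: five independent triples $V_1,\dots,V_5$ with $G[V_i\cup V_{i+1}]\cong K_{3,3}$ for each $i$ (indices mod~5). The ``heavy sides coexist'' step you flag as the bulk of the work is in fact a one-line parity argument: by \cref{le:bipartite-3}, for each consecutive pair exactly one of $V_i,V_{i+1}$ is ever simultaneously visible, so the ``heavy'' label alternates around an odd cycle, a contradiction. There is no case analysis.

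Your Part~3 has a genuine gap. The ``first vertex to disappear'' argument cannot work as stated: in \emph{any} triangle-free graph the closed neighbourhood $N[v]$ induces a star, so the frame in which $v$ disappears need not contain a cycle, and nothing about $4$-regularity changes this. You would need extra vertices to be forced alive, but you have not identified a mechanism for that. The paper instead reuses the Part~1 idea verbatim, scaled down: take $C_{2,2,2,2,2}$ (five independent pairs with $K_{2,2}$ between consecutive pairs). The analogue of \cref{le:bipartite-3} for $K_{2,2}$ in a \emph{forest} storyplan (exactly one side is ever simultaneously visible, else a $C_4$ appears) yields the same odd-cycle parity contradiction, forcing a visible $K_{2,2}$. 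The outerplanar storyplan is then exhibited by an explicit ordering of the ten vertices, as you suggest. So the key insight you are missing is that Parts~1 and~3 are the \emph{same} argument at two scales; once you have one, the other is free, and neither is the hard part.
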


\begin{proof}
	\begin{enumerate}
		\item The graph $C_{3,3,3,3,3}$ (see \cref{fig:triangle-free1})
		is $\triangle$-free and 6-regular, but does not admit a planar
		storyplan as we will now show.  For brevity, let
                $G=C_{3,3,3,3,3}$ and let $V_1 \cup \dots \cup V_5$
		be the partition of the vertex set of~$G$ into
                independent sets of size~3.
		Note that, for $i \in [5]$,
		$G[V_i \cup V_{(i \bmod 5)+1}]$ is isomorphic to $K_{3,3}$.  For
		$K_{3,3} = G[V_1 \cup V_2]$, we know by \cref{le:bipartite-3}
		that, in any planar storyplan, either all vertices of $V_1$ or all
		vertices of $V_2$ are shown simultaneously, say, those of $V_1$.
		Hence, for a frame to be planar, the vertices of $V_2$ and $V_5$
		cannot be shown simultaneously.  This, in turn, means that the
		vertices of $V_3$ and $V_4$ must be shown simultaneously.  But then
		there must be a frame with a drawing of the non-planar graph
		$G[V_3 \cup V_4]=K_{3,3}$.
		
		\item Observe that the octahedron (see \cref{fig:octahedron})
		is planar, 4-regular, and $K_4$-free, but does not admit
		an outerplanar storyplan due to \cref{rem:platonic}.
		
		\item The graph $C_{2,2,2,2,2}$ (see \cref{fig:triangle-free3}) is
		$\triangle$-free and 4-regular, but does not admit a forest
		storyplan.  The proof is analogous to the one above.  There needs
		to be a frame with a drawing of $K_{2,2}$, which is not a tree.
		On the other hand, the order of the vertices shown in
		\cref{fig:triangle-free3} yields an outerplanar storyplan.  Note
		that we cannot use the vertex positions exactly as in the figure,
		but if we place vertex~8 at the position of vertex~6 (which will
		have disappeared by then), every frame is crossing-free.
	\end{enumerate}
\end{proof}

\begin{figure}[tb]
	\centering
	\begin{subfigure}[t]{.3\linewidth}
		\centering
		\includegraphics[page=1]{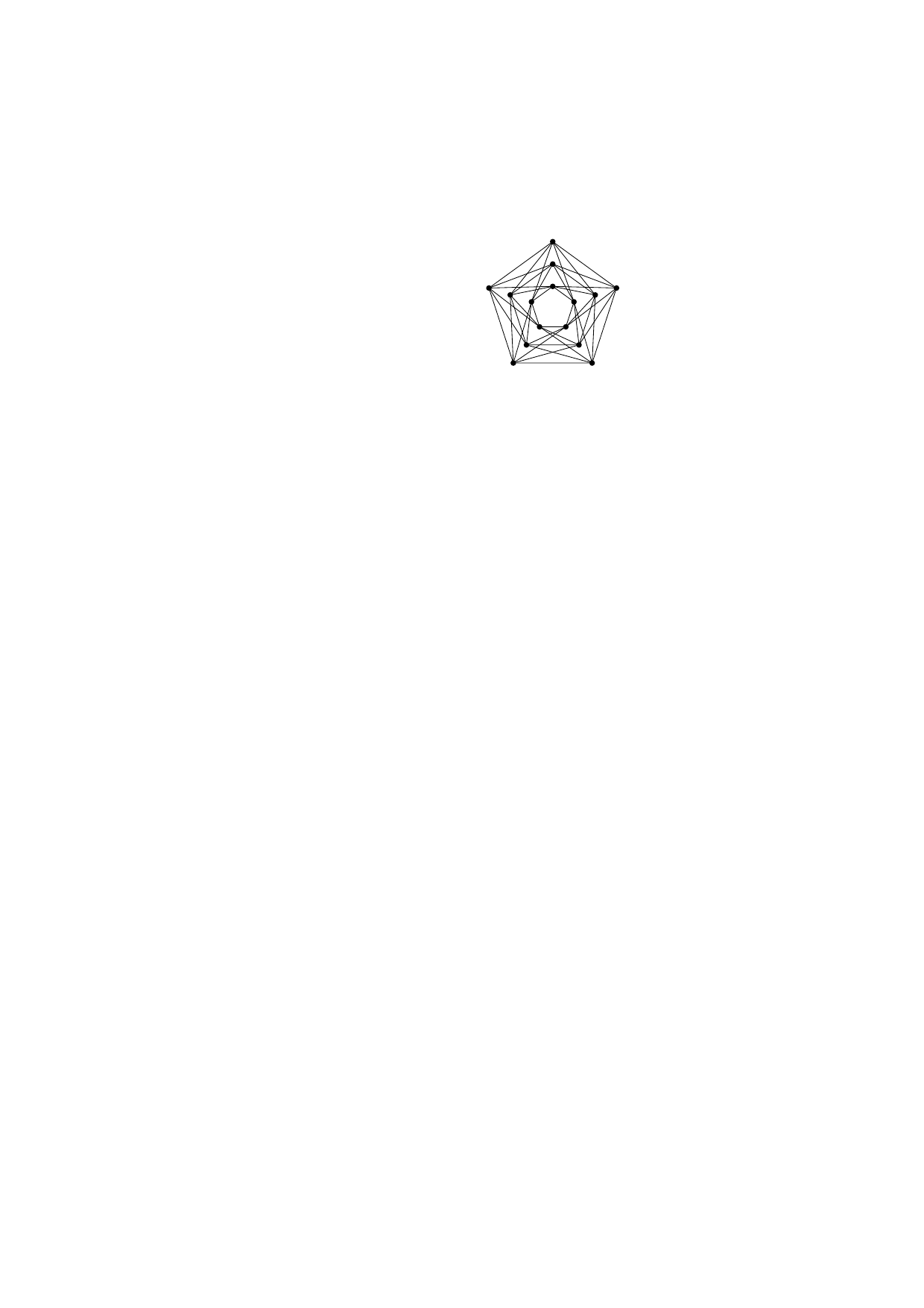}
		\caption{$C_{3,3,3,3,3}$}
		\label{fig:triangle-free1}
	\end{subfigure}
	\hfill
	\begin{subfigure}[t]{.3\linewidth}
		\centering
		\includegraphics{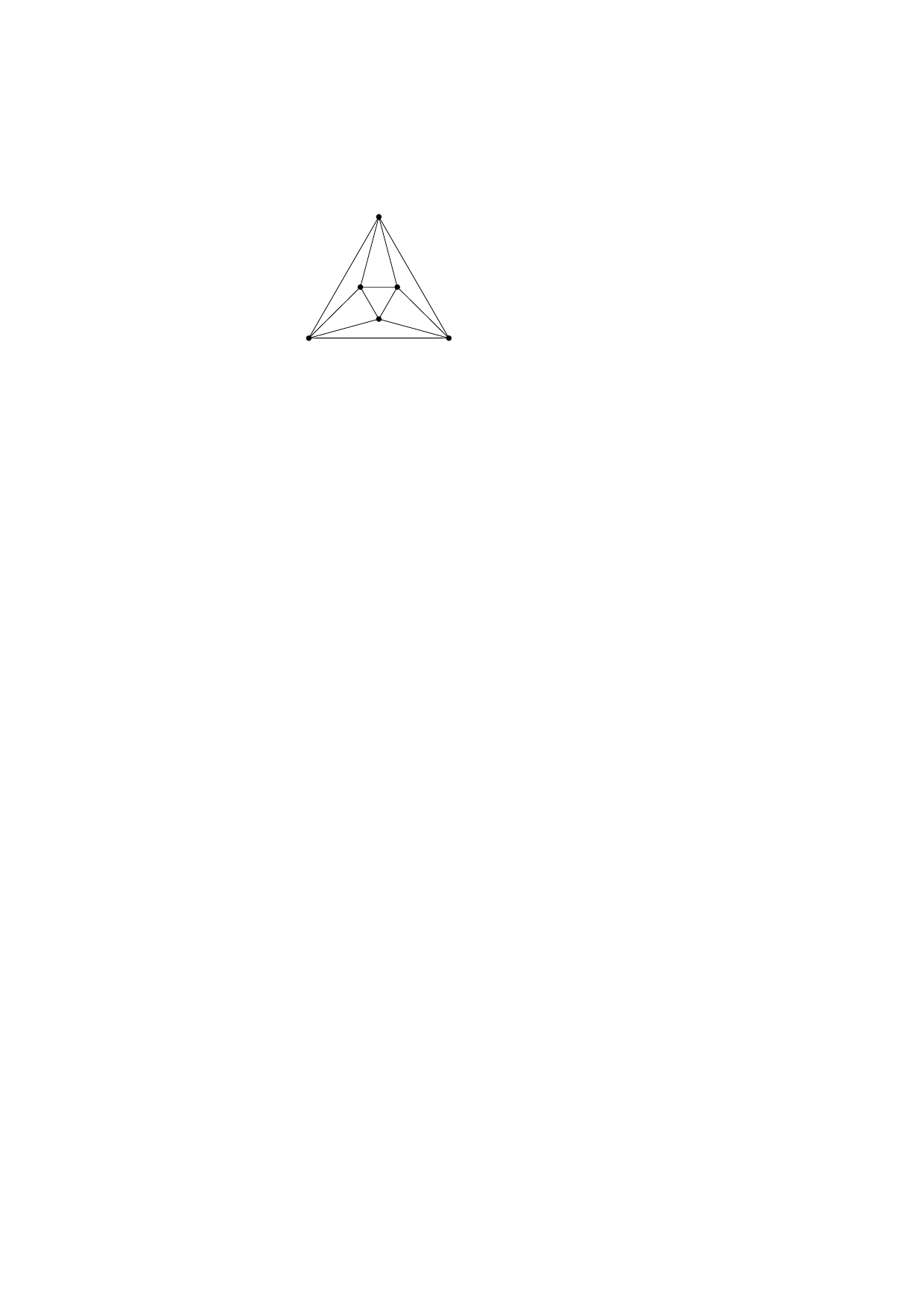}
		\caption{the octahedron graph}
		\label{fig:octahedron}
	\end{subfigure}
	\hfill
	\begin{subfigure}[t]{.3\linewidth}
		\centering
		\includegraphics[page=3]{triangle-free-without-storyplan}
		\caption{$C_{2,2,2,2,2}$}
		\label{fig:triangle-free3}
	\end{subfigure}
	\caption{Three graphs from the proof of \cref{thm:negative}.
		The graph in (\subref{fig:triangle-free1}) is
		$\triangle$-free and does not admit any planar storyplan.
		The octahedron graph in (\subref{fig:octahedron}) does not
		admit any outerplanar storyplan.  The graph in
		(\subref{fig:triangle-free3}) is $\triangle$-free and does
		not admit any forest storyplan (but the vertex numbering
		corresponds to an outerplanar storyplan~-- if vertex~8 is
		placed at the position of vertex~6, which will have
		disappeared by then).}
	\label{fig:triangle-free}
\end{figure}

\section{Computational Complexity}
\label{sec:complexity}

In this section, we show that both problems \textsc{Forest Storyplan}
and \textsc{Outerplanar Storyplan} are \NP-hard.  To this end, we
reduce from \textsc{Positive One-In-Three 3-SAT} in such a way that,
if the given 3-SAT formula is a yes-instance, then the graph that we
construct admits a forest storyplan (and hence an outerplanar
storyplan), whereas if the 3-SAT formula is a no-instance, then the
graph that we construct does not even admit an outerplanar storyplan
(let alone a forest storyplan).

In \textsc{One-In-Three 3-SAT} one is given a Boolean formula in
conjunctive normal form with at most three literals per clause and the
task is to decide whether there exists a variable assignment such that
in every clause exactly one literal is assigned to be \true. By
Schaefer's dichotomy theorem \cite{s-csp-STOC78}, \textsc{One-In-Three
3-SAT} is \NP-hard even if no negative literals are admitted;
the problem is then called \textsc{Positive One-In-Three 3-SAT}.

Given an instance $F$ of \textsc{Positive One-In-Three 3-SAT}, we
construct in polynomial time a graph $G_F$ such that $G_F$ admits a
forest storyplan if $F$ is a yes-instance and does not admit even an
outerplanar storyplan if $F$ is a no-instance.  We may assume that $F$
does not contain a clause with two occurrences of the same variable
since then this variable would have to be assigned to \false,
and we could remove it from the instance.

We present our variable and clause gadgets and then show the
correctness of the reduction.  Let $n$ be the number of variables in
$F$, let $m$ be the number of clauses in $F$, and, for
$i \in [n]$, let $p_i$ be the number of occurrences of the
variable $x_i$ in $F$.%

\subparagraph{Variable gadget.}

For every $i \in [n]$, we take as gadget for the variable~$x_i$ a
copy~$H_i$ of the complete bipartite graph~$K_{6p_i, 3}$.  Let~$A_i$
and~$B_i$ be the two sets in the bipartition of the vertex set
of~$H_i$ such that $|A_i|= 6p_i$ and $|B_i|=3$; see
\cref{fig:variable-gadget}.  Recall that every bipartite graph admits
a forest storyplan; see \cref{rem:bipartite}.

\begin{figure}[tb]
	\centering
	\includegraphics{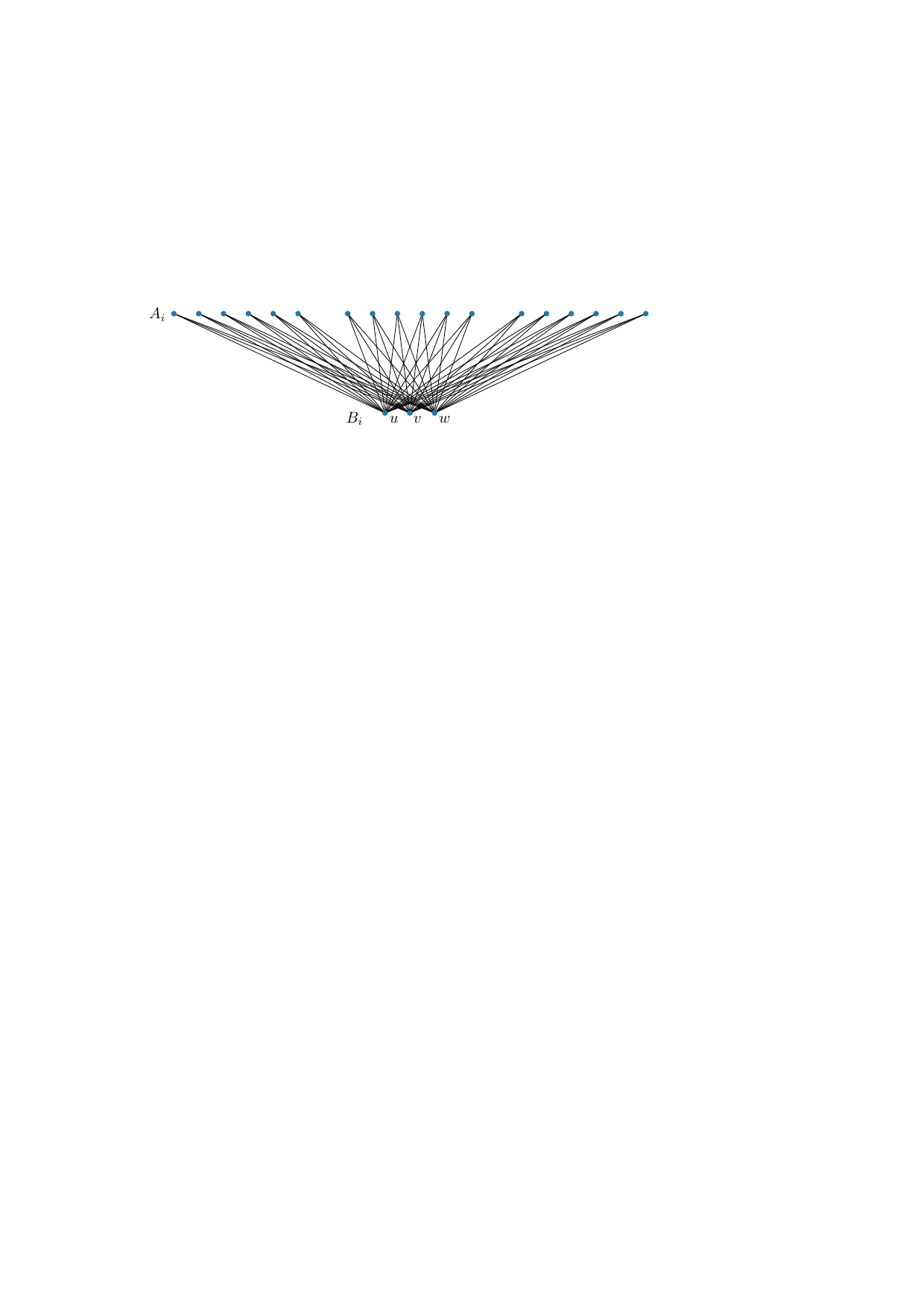}
	\caption{The gadget~$H_i=K_{18,3}$ for a variable~$x_i$ that
          occurs three times in $F$.}
	\label{fig:variable-gadget}
\end{figure}

We know that, for every $i \in [n]$, exactly one of~$A_i$ and~$B_i$ is
completely visible in some frame of an outerplanar storyplan.  This is
due to \cref{le:bipartite-small} and to the fact that $6p_i > 3$.  We
call this set in the bipartition \emph{fix} and the other set
\emph{flex}.  No two vertices of the \flex set are visible
in the same frame.  For every $i \in [n]$, we assign the
variable~$x_i$ to \true if~$A_i$ is \fix, and to \false if~$A_i$
is~\flex.  For each occurrence of~$x_i$ in a clause of~$F$, we use six
unique vertices of~$A_i$ in the corresponding clause gadget.

\subparagraph{Clause gadget.}

We first construct a gadget for a clause that consists of two variables,
then we construct a gadget for a clause that consists of three variables.

Let $C_t = (x_i, x_j)$ with $i,j \in [n]$ and $t\in [m]$ be a clause of
$F$ that consists of two (different) variables (assuming that such a
clause exists).  The gadget for~$C_t$ consists of six vertices
of~$A_i$ and six vertices of~$A_j$ such that the graph induced by
these twelve vertices is the complete bipartite graph~$K_{6,6}$.  Note
that, by \cref{le:bipartite-small}, exactly one of the sets in the
partition of the vertex set of this~$K_{6,6}$ has to be \fix.  If this
\fix~part appears first, then the vertices from the \flex~part appear
one by one, and we obtain a valid forest storyplan for
such a clause gadget.  Note that the \fix~part may stay visible after
the last vertex of the \flex~part disappears. We use this observation
later to show how to synchronize different gadgets.

Let $C_t = (x_i, x_j, x_k)$ with $i,j,k \in [n]$ and $t\in [m]$ be a
clause of $F$ that consists of three different variables.  The gadget~$I_t$
for~$C_t$ is the graph depicted in \cref{fig:clause-gadget}.  The graph
has six vertices from each of~$A_i$, $A_j$, and~$A_k$ (shown in
different colors in \cref{fig:clause-gadget}).  Note that each cycle of
the graph contains at least one vertex of each color.  The graph admits
a forest storyplan if exactly one of $A_i$, $A_j$,
and $A_k$ is \fix. We order the vertices as shown in
\cref{fig:clause-gadget}: first, the vertices of the \fix~part appear
(the blue vertices in \cref{fig:clause-gadget}), then the remaining
vertices appear one by one. Note that the \fix~part may stay visible after
the last vertex of the \flex~parts disappears.

\begin{figure}[htb]
	\centering
	\includegraphics[page = 3]{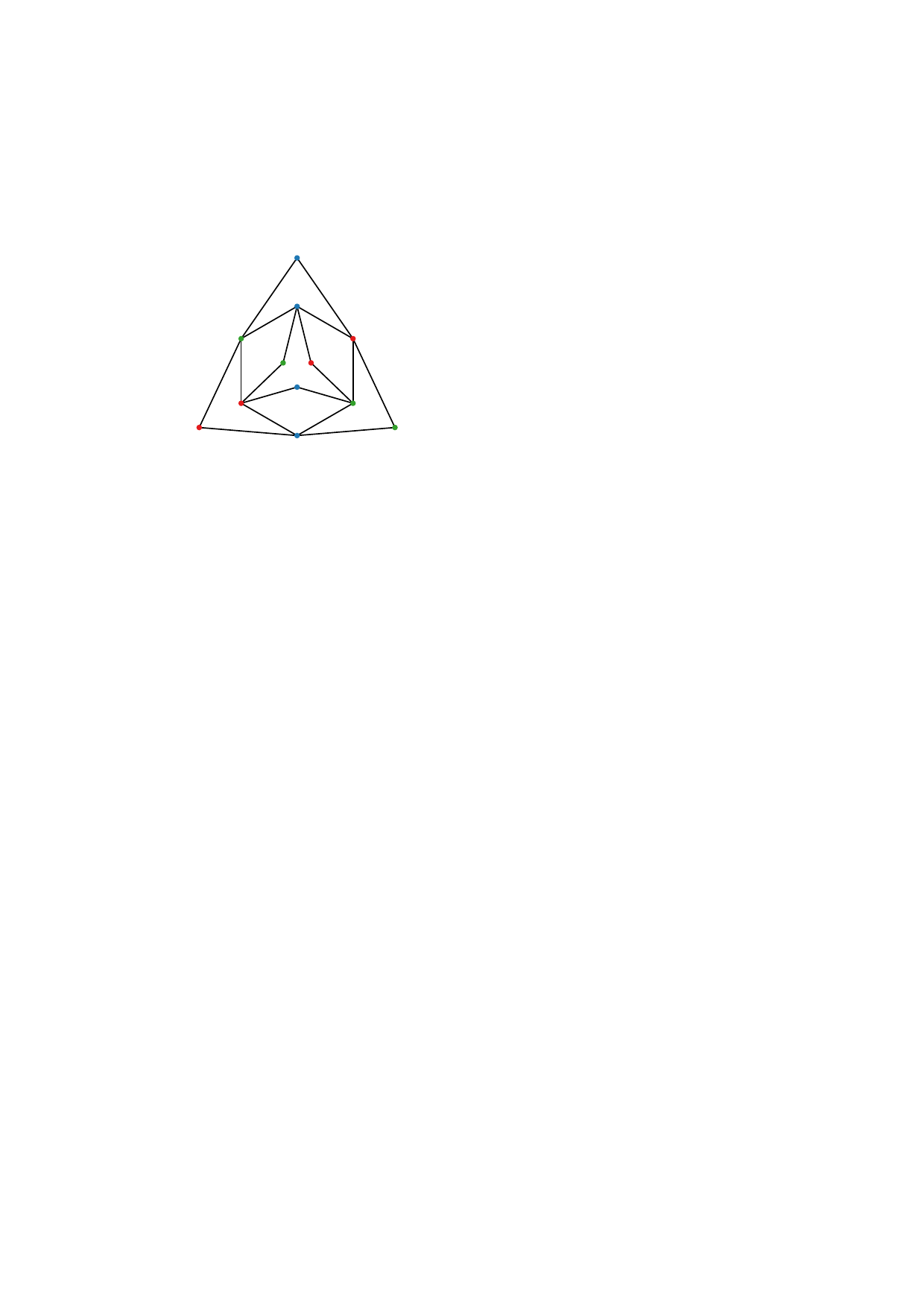}
	\caption{The gadget for the clause $x_i \lor x_j \lor x_k$ (inside the large yellow disk).
          The blue, green, and red vertices are the vertices of~$A_i$,
          $A_j$, and~$A_k$, respectively.  The numbers correspond to
          an ordering of the vertices that yields a forest storyplan; in
          this case, $A_i$ is \fix, and $A_j$ and $A_k$ are \flex.  For clarity, not all edges (gray) to adjacent vertices (in the variable gadgets $H_i$, $H_j$, and $H_k$) are shown.}
	\label{fig:clause-gadget}
\end{figure}

\subparagraph{Correctness.}

In order to synchronize different clause gadgets, we use the following
strategy. The vertices of $G_F$ appear in three phases,
in the following order: in the first phase,
for each $i \in[n]$, the \fix~part of the variable gadget~$H_i$ appears.
In the second phase, for each $t \in [m]$, the \flex~parts of the variable
gadgets involved in the clause~$C_t$ appear (and immediately
disappear) in the order described in the paragraph ``Clause gadget''.
In the third phase, for each $i \in [n]$ such that $x_i$ is set to \true,
the vertices in~$B_i$ appear (and immediately disappear) one
after the other (since they are \flex).
Note that this determines an ordering of the vertices of~$G_F$.

We show that if every clause in $F$ contains exactly one \true
variable, then using the strategy above yields a forest storyplan of
$G_F$.  On the other hand, if~$F$ is not satisfiable, then there
exists a $t \in [m]$ such that the clause~$C_t$ is not satisfied and
the corresponding clause gadget itself or, in the gadget of a
variable~$x_i$ that occurs in~$C_t$, the graph induced by~$B_i$ does
not admit an outerplanar storyplan.

We now assume that each clause contains exactly one \true variable.
Note that the vertices of the \fix parts are not connected,
thus in the first phase, we obtain a set of isolated
vertices and hence a set of frames of a valid forest storyplan.
After adding the \flex parts in the second phase, we further obtain
frames of a valid forest storyplan.  To see this,
let $t \in [m]$, and let $h$ be a vertex of one of the \flex parts of
the clause gadget~$I_t$.  As mentioned in the paragraph ``Clause
gadget'' above, the frame in which $h$ (together with its neighbors
from~$I_t$) appears consists of a drawing of a forest.
Apart from its neighbors in~$I_t$, $h$ is
adjacent only to the three vertices of the \fix part of the
corresponding variable gadget, and when $h$ is visible,
these three vertices are only adjacent to~$h$.
Thus, the frame in which $v$ appears
consists of a drawing of a forest.
Finally, each vertex~$u$ of a \flex part ($u \in B_i$ for
some $i \in [n]$) that is added in the third phase is
adjacent to vertices of a \fix part ($A_i$) that are only
adjacent to~$u$ when $u$ is visible.
Note that, since we do not insist on straight-line edges,
a partial drawing of a forest can always be extended without
introducing crossings.
Thus, the corresponding frame consists of a crossing-free
drawing of a forest.

It remains to show that there is no outerplanar storyplan if zero,
two, or three variable gadgets belonging to a clause are \true.
Recall that if a clause consists of only two variables, the
corresponding clause gadget admits a forest storyplan only if
exactly one part of the gadget is \fix~(by
\cref{le:bipartite-small}).
So we now consider a clause $C_t = (x_i, x_j, x_k)$ with
$t \in [m]$ and $i,j,k \in [n]$ that consists of three different
variables.  Recall that~$I_t$ is the clause gadget corresponding
to~$C_t$.  Let $a,b \in A_i \cap V(I_t)$, $c,d \in A_j \cap V(I_t)$,
and let $e,f \in A_k \cap V(I_t)$ be the vertices of degree~6 in~$I_t$.
Furthermore, let $a',a'',b',b'' \in A_i \cap V(I_t)$,
$c',c'',d',d'' \in A_j \cap V(I_t)$, and
$e',e'', f',f'' \in A_k \cap V(I_t)$ be the corresponding vertices
of degree~2 in~$I_t$; see~\cref{fig:clause-gadget-negative-cases}.

\begin{figure}[htb]
	\centering
	\begin{subfigure}[t]{.31\linewidth}
		\centering
		\includegraphics[page=1]{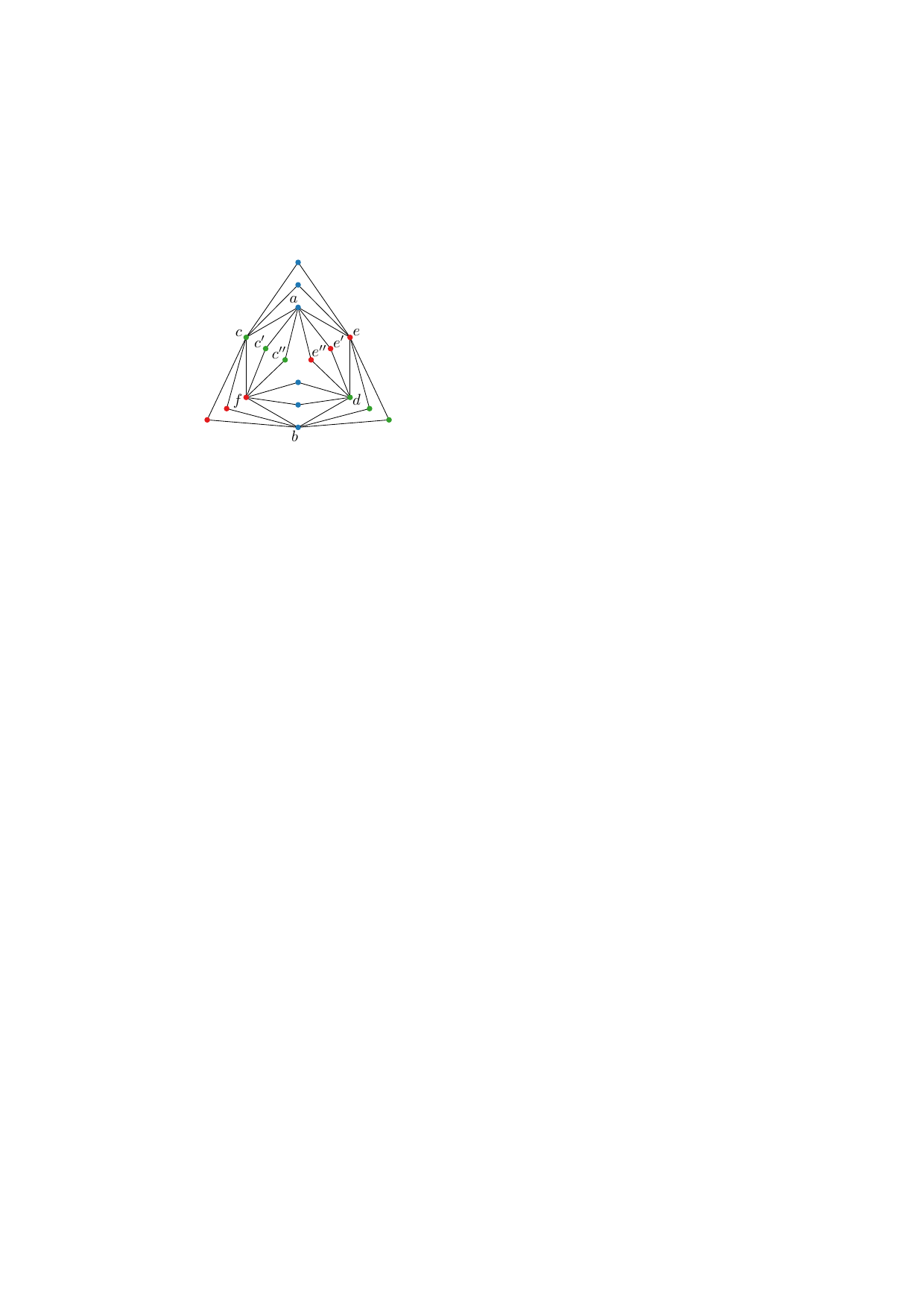}
		\caption{All $A_i$, $A_j$, and $A_k$ are \flex.}
		\label{fig:clause-gadget-flex-flex-flex}
	\end{subfigure}
	\hfill
	\begin{subfigure}[t]{.31\linewidth}
		\centering
		\includegraphics[page=2]{clause-gadget-negative-cases}
		\caption{All $A_i$, $A_j$, and $A_k$ are \fix.}
		\label{fig:clause-gadget-fix-fix-fix}
	\end{subfigure}
	\hfill
	\begin{subfigure}[t]{.34\linewidth}
		\centering
		\includegraphics[page=3]{clause-gadget-negative-cases}
		\caption{$A_i$ and $A_j$ are \fix, $A_k$ is \flex.}
		\label{fig:clause-gadget-fix-fix-flex}
	\end{subfigure}
	\caption{The clause gadget.  The blue, green, and red vertices
          are the vertices of~$A_i$, $A_j$, and~$A_k$, respectively.  None of the indicated choices for fix and flex parts admits an outerplanar storyplan.}
	\label{fig:clause-gadget-negative-cases}
\end{figure}

We distinguish the following three cases.
In each case, we assume that $G_F$ admits an outerplanar storyplan
$\mathcal{S} =\langle \tau, (D_r)_{r \in [n_F]} \rangle$ (where $n_F$
is the number of vertices of $G_F$) and show that this leads to a
contradiction.

\smallskip\noindent\textbf{Case 1.}
All parts $A_i$, $A_j$, and $A_k$ are \flex
(see \cref{fig:clause-gadget-flex-flex-flex}).

Recall that no two vertices of a \flex~part of a variable gadget
are visible at the same frame
(due to \cref{le:bipartite-small}).
Without loss of generality, assume that $a$ is the first degree-6 vertex
of~$I_t$ that appears in $\mathcal{S}$,
i.e., $\tau(a) \le \tau(v)$ for every $v \in V(I_t)$ with $\deg(v) = 6$.
Since $a$ and $b$ are from the same (\flex) partition $A_i$,
$b$ can appear only after $a$ disappears.  In \cref{fig:dependencies},
this precedence constraint is represented by arc~\mynumber{1}.
Thus, all neighbors of~$a$ (and, hence, $c$) appear before~$b$
(arc~\mynumber{2}).
Note that $d$ appears after~$c$ has disappeared (arc~\mynumber{3}),
because both belong to~$A_j$ and $d$ can only disappear
after all its neighbors (including~$b$) have appeared
(arc~\mynumber{4}). Similarly, $f$ appears after $e$ has disappeared
(arc~\mynumber{5}) since both belong to~$A_k$ and $f$ cannot
disappear before~$b$ has appeared (arc~\mynumber{6}).
However, $c$ cannot disappear before its neighbor $f$ has appeared
(arc~\mynumber{7}) and $e$ cannot disappear before its neighbor~$d$
has appeared (arc~\mynumber{8}).  The circular sequence \mynumber{3},
\mynumber{8}, \mynumber{5}, \mynumber{7} of precedence constraints
constitutes the desired contradiction.

\begin{figure}[htb]
  \centering
  \includegraphics{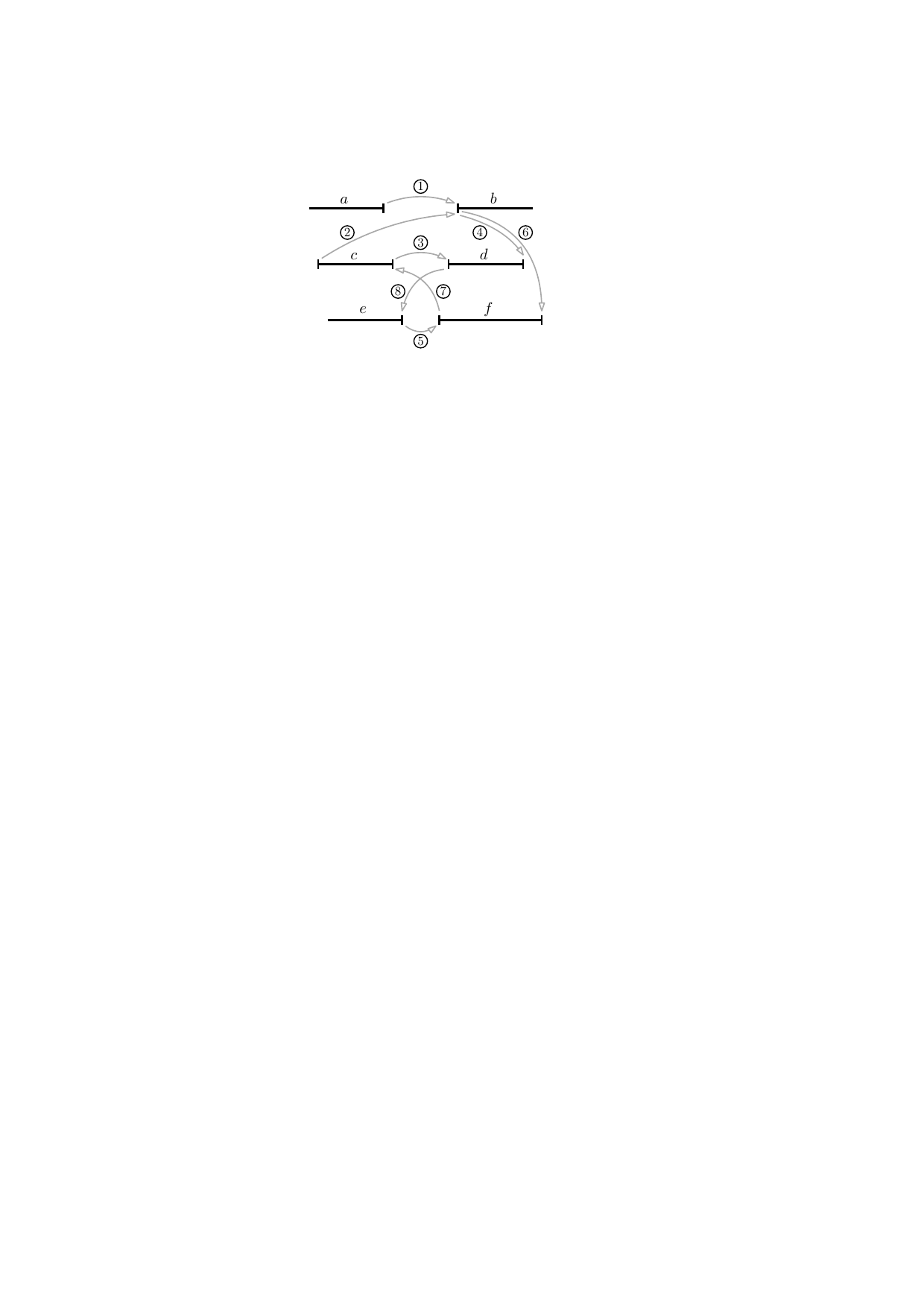}
  \caption{Precedence constraints between the start and end points of
    the visibility intervals of vertices of the clause gadget (in
    Case~1).}
  \label{fig:dependencies}
\end{figure}

\smallskip\noindent\textbf{Case 2.}
All parts $A_i$, $A_j$, and $A_k$ are \fix
(see \cref{fig:clause-gadget-fix-fix-fix}).

Recall that all vertices of a \fix~part of a variable gadget
are visible at some frame (due to \cref{le:bipartite-small}).
Consider the frame of~$\mathcal{S}$ where the last vertex
of~$A_i \cap V(I_t)$ appears.  All vertices of $A_i$ are visible
in this frame.  Without loss of generality, assume that not all vertices of
$A_j \cap V(I_t)$ and $A_k \cap V(I_t)$ have appeared.
Thus, none of the vertices of $A_j$ and $A_k$ has disappeared.
The vertices $c$ and $e$ cannot be visible at the same frame as,
otherwise, a $K_{2,3}$ formed by $c, e, a, a', a''$ is visible at
some frame, which violates outerplanarity.
Without loss of generality, assume that $c$ disappears before $e$
appears. The vertex $c$ can appear only after $b$ disappears,
otherwise a $K_{2,3}$ formed by $c, b, f, f', f''$ would be visible at
some frame. The vertex $b$ disappears only after the vertices
$d, d', d'', f, f', f''$ have been visible. Note that none of them
has already disappeared because $c$ and $d$ have not appeared yet.
Since $a$ and $f$ are both visible at some frame, all three
vertices $c$, $c'$, and $c''$ cannot be visible at the same
frame, as otherwise they would together form a~$K_{2,3}$.
This is the desired contradiction.

\smallskip\noindent\textbf{Case 3.}
$A_i$, $A_j$ are \fix, and $A_k$ is \flex
(see \cref{fig:clause-gadget-fix-fix-flex}).

In order to show that $G_F$ does not yield an outerplanar
storyplan, it is not enough to consider the clause
gadget~$I_t$.  Let~$I'_t$ be a subgraph of $G_F$ induced
by the vertices of~$I_t$ and two additional vertices, a vertex
$u \in B_i$ and a vertex $v \in B_j$. Note that both
partitions $B_i$ and $B_j$ are \flex.
Recall that all vertices of a \fix~part of a variable gadget
are visible at some frame (due to \cref{le:bipartite-small}).
Consider the frame $D_r$ of $\mathcal{S}$ with $r \in [n_F]$
where the last vertex of~$A_i \cap V(I'_t)$ appears.
All vertices of $A_i$ are visible
at $D_r$. Without loss of generality assume that not all vertices
of~$A_j$ have appeared before step $r$. Thus, none of the vertices
of $A_j$ disappears at or before step~$r$. None of the vertices
of~$A_i$ can disappear before $u$ appears. The vertex $u$ can be
visible only if neither $c$ nor $d$ is visible as otherwise a~$K_{2,3}$
is visible at some frame. The vertex~$c$~($d$)
cannot disappear before $u$ appears, as otherwise $a$ and~$b$ are still
visible when $c$~($d$) disappears
and there would be a $K_{2,3}$
formed by $v, a, c, c', c''$ ($v, b, d, d', d''$) visible at some frame.
Thus, $c$ and $d$ can appear only after $u$ disappears, which means
that $c$ and $d$ are not visible at $D_r$. 
Since $e$ is adjacent to $d$, and $f$ is adjacent to~$c$,
$e$ and $f$ can disappear only after step~$r$.
Without loss of generality, assume that $e$ appears before $f$. Since $e$ and $f$
belong to the same \flex~part, they do not appear at the same
frame. Consider the subgraph of $I'_t$ induced by
$e, b, d, d', d''$.  It is isomorphic to~$K_{2,3}$.  Since $d$, $d'$, and
$d''$ are from a \fix~part, $b$ and $e$ cannot be visible in the
same frame. Hence, $e$ appears only after $b$ disappears.
But $b$ can disappear only after~$f$ appears.
This is a contradiction
to the assumption that $e$ appears before~$f$.

\smallskip
We have shown that the graph $G_F$ admits a forest storyplan
if the formula $F$ is satisfiable. Otherwise $G_F$ does not admit
even an outerplanar storyplan.
Observe that the size of the graph $G_F$ constructed from an
instance $F$ is linear in the size of $F$.

As a result we obtain the \NP-hardness of both problems --
\textsc{Forest Storyplan} and \textsc{Outerplanar Storyplan}.

\begin{theorem}
	It is \NP-hard to decide if a given graph $G$ admits a forest
	storyplan.
\end{theorem}

\begin{theorem}
	It is \NP-hard to decide if a given graph $G$ admits an outerplanar
	storyplan.
\end{theorem}

In the following, we show that both problems are in~\NP.
The proof is based on the corresponding proof for \textsc{Planar Storyplan} of
Binucci et al.~\cite{bdllmms-csp-JCSS24}.

\begin{theorem}
  Both problems \textsc{Forest Storyplan} and \textsc{Outerplanar
    Storyplan} are in \NP.
\end{theorem}

\begin{proof}
	In order to show that \textsc{Forest Storyplan} is in \NP, it is
	enough to guess the vertex order and check in each step whether the graph
	is a forest. We do not need to check the drawings in each step
	since we can always extend a drawing of a tree by adding new
	vertices and edges avoiding crossings.
	
	For \textsc{Outerplanar Storyplan} we need to additionally check the
	drawings in each step and their ``consistency''~-- the common parts
	of the drawings in each pair of consecutive steps have to coincide.
	To this end, we rely on the proof of
        \cite[Theorem 3]{bdllmms-csp-JCSS24}, where Binucci
	et al.\ showed that \textsc{Planar Storyplan} is in \NP, but instead
	of looking for planar embeddings, we look for outerplanar ones.
\end{proof}

\section{Parameterized Algorithms}
\label{sec:fpt}

In this section, we observe that the parametrized algorithms (with respect to two common structural parameters) for
\planarStory can be modified to obtain corresponding algorithms for
\forestStory and for \outerStory.

\subsection{\FPT\ Algorithm with Respect to the Vertex Cover Number}

First we show how to obtain \FPT\ algorithms with respect to the vertex
cover number for \forestStory and \outerStory using the
corresponding \FPT\ algorithm for \planarStory of
Binucci et al.~\cite[Theorem 4]{bdllmms-csp-JCSS24}.
A subset~$C$ of the vertex set of a graph~$G$ is a
\emph{vertex cover} of $G$ if every edge of~$G$ is incident to a
vertex in~$C$, and the \emph{vertex cover number} of $G$ is the
minimum size of a vertex cover of~$G$.
We use the same notation as Binucci et al.

Without loss of generality, we assume that the input graph $G$ does
not contain isolated vertices, as such vertices do not affect the
existence of a storyplan (of any kind).  Let $C$ be a vertex cover of
size $k$ of $G$.  For $U \subseteq C$, a vertex
$v \in V \setminus C$ is of \emph{type}~$U$ if $N(v) = U$, where
$N(v)$ denotes the set of neighbors of $v$ in $G$. This defines an
equivalence relation on $V \setminus C$ and in particular
partitions $V \setminus C$ into at most
$\sum_{i=1}^{k} {k \choose{i}}=2^{k}-1 < 2^k$ distinct types.
Let $V_U$ denote the set of vertices of type~$U$.

In order to obtain a kernel of $G$, Binucci et
al.~\cite{bdllmms-csp-JCSS24} used the property of planar storyplans
for bipartite graphs (see \cref{le:bipartite-3}) to define three
reduction rules.

\begin{enumerate}[(R1)]
\item {\em If there exists a type $U$ such that $|U| = 1$, then pick
    an arbitrary vertex $x \in V_U$ and remove it from
    $G$.}\label{rule:1}

\item {\em If there exists a type $U$ such that $|U| = 2$ and
    $|V_U| > 1$, then pick an arbitrary vertex $x \in V_U$ and remove
    it from $G$.}\label{rule:2}

\item {\em If there exists a type $U$ such that $|U| \ge 3$ and
    $|V_U| > 3$, then pick an arbitrary vertex $x \in V_U$ and remove
    it from $G$.}\label{rule:3}
\end{enumerate}

We modify the rules according to analogous properties of
outerplanar and forest storyplans
(see \cref{le:bipartite-small}).
To this end, we keep the rule \textsf{R\ref{rule:1}} for both
\forestStory and \outerStory, and we combine
rules~\textsf{R\ref{rule:2}} and~\textsf{R\ref{rule:3}} in a slightly
different manner for \forestStory (rule \textsf{R2$^\text{f}$}) and
for \outerStory (rule \textsf{R2$^\text{o}$}).

\begin{description}
\item[(R2$^\text{f}$)] {\em If there exists a type $U$ such that $|U| \ge 2$ and $|V_U| > 2$, then pick an arbitrary vertex $x \in V_U$ and remove it from $G$.}

\item[(R2$^\text{o}$)] {\em If there exists a type $U$ such that $|U| \ge 2$ and $|V_U| > 3$, then pick an arbitrary vertex $x \in V_U$ and remove it from $G$.}
\end{description}

Analogously to \cite[Lemma 4]{bdllmms-csp-JCSS24}, we can show that a
graph $G'$ that is obtained from $G$ by applying the reduction
rules~\textsf{R\ref{rule:1}} and~\textsf{R2$^\text{f}$} admits a
forest storyplan if and only if~$G$ does.  Similarly, we can show that
a graph~$G''$ that is obtained from $G$ by applying the reduction
rules~\textsf{R\ref{rule:1}} and~\textsf{R2$^\text{o}$}
admits an outerplanar storyplan if and only if $G$ does.
The main idea is that we can always add one of the removed
vertices to a storyplan without violating the conditions of a
forest (or outerplanar) storyplan.
Thus, a kernel of $G$, i.e., a graph obtained after exhaustively
applying the rules, has size $\mathcal{O}(2^k)$.
As a consequence, we can show that deciding whether a graph~$G$ admits
a forest (or outerplanar) storyplan, and computing such a storyplan
(if any) can be done in $\mathcal{O}(2^{2^{\mathcal{O}(k)}}+n^2)$
time, where $n$ is the number of vertices of~$G$ and $k$ is the vertex
cover number of~$G$.

\subsection{\FPT\ Algorithm with Respect to the Feedback Edge Set Number}

Binucci et al.~\cite[Theorem 5]{bdllmms-csp-JCSS24} also derived
an \FPT\ algorithm with respect to the feedback edge set number.
A subset~$F$ of the edge set of a graph~$G$ is a
\emph{feedback edge set} if $G-F$ is acyclic,
and the \emph{feedback edge set number} of $G$ is the minimum
size of a feedback edge set of~$G$.  
In their algorithm, Binucci et al.\ defined the following reduction rules
to obtain a kernel of a graph $G$ of size $\mathcal{O}(f)$, where $f$
is the feedback edge set number of~$G$.

\begin{enumerate}[(RA)]
\item {\em If there exists a vertex
	of degree one, then remove it from $G$.} \label{rule:a}

\item {\em If, for $k \ge 3$, there
	exists a path with $k+2$ vertices such that its $k$
	inner vertices all have degree two, then remove its inner
	vertices from~$G$.} \label{rule:b}
\end{enumerate}

Their algorithm also works for \forestStory and
\outerStory:
After exhaustively applying both rules,
the resulting graph has size $O(f)$.
This means that we can guess a forest or outerplanar storyplan if it exists.
Into this storyplan, we re-insert the vertices that have been
removed by applying the rules as follows.
A leaf~$v$ removed by rule~\textsf{R\ref{rule:a}} appears
directly after the unique neighbor of~$v$ has appeared.
Clearly, the graph in the frame where $v$ appears is still a tree or
outerplanar graph, and $v$ disappears afterwards.
Now consider a path $P = \langle v_0, v_1, \dots, v_k, v_{k+1} \rangle$
whose inner vertices $v_1, \dots, v_k$ have been removed
due to rule~\textsf{R\ref{rule:b}}.
Let~$v_1$ appear right after~$v_0$, and let~$v_k$ appear right after~$v_{k+1}$.
Let $v_2, \dots, v_{k-1}$ appear in this order as the last vertices in the storyplan.
Since $k \ge 3$, $v_1$ and $v_k$ are leaves until $v_2$ appears.
By then, however, the rest of the graph has disappeared
and only parts of $P$ are visible at the same time,
which ensures that the additional frames at the end of the storyplan
show forests.
Thus, both problems admit \FPT\ algorithms with respect to the feedback
edge set number~$f$ running in $\mathcal{O}(2^{\mathcal{O}(f \log f)}+n^2)$
time~\cite[Theorem 5]{bdllmms-csp-JCSS24},
where $n$ is the number of vertices of the given graph.

\section{Outerplanar Storyplans}
\label{sec:outer}

In this section we present families of graphs that admit
outerplanar storyplans.

\begin{theorem}
	Every partial 2-tree admits a straight-line outerplanar
	storyplan, and such a storyplan can be computed in linear time.
\end{theorem}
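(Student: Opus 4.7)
By Property~\ref{prop:partial-graph}, I would first reduce to the case of $2$-trees: any partial $2$-tree is a subgraph of a $2$-tree (computable in linear time), so an outerplanar straight-line storyplan for the enclosing $2$-tree yields one for the partial one.

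The plan is then to walk the ``edge-rooted'' decomposition of the $2$-tree recursively. Define a routine $\textsc{Process}(H, uv)$ that takes a sub-$2$-tree $H$ with distinguished edge $uv$, under the precondition that $u$ and $v$ are already introduced, drawn, and remain visible throughout the call. Let $uvw_1, \ldots, uvw_k$ enumerate the triangles of $H$ containing $uv$, and let $L_i$ (resp.~$R_i$) be the maximal sub-$2$-tree of $H$ attached to the triangle $uvw_i$ through the edge $uw_i$ (resp.~$vw_i$). For $i = 1, \ldots, k$, the routine introduces $w_i$, recursively invokes $\textsc{Process}(L_i, uw_i)$ followed by $\textsc{Process}(R_i, vw_i)$, and then lets $w_i$ disappear; the outermost call is $\textsc{Process}(G, uv)$ for an arbitrary edge $uv$ of $G$.

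The key structural fact that drives the outerplanarity of every frame is that in a $2$-tree the triangles containing any given vertex form a subtree of the clique tree, so the only vertices shared between $L_i$ and the rest of $H$ are $u$ and $w_i$; in particular, no vertex of $L_i\setminus\{u,w_i\}$ is adjacent to $v$ (and symmetrically for $R_i$). Thus, while $\textsc{Process}(L_i, uw_i)$ runs, the outer vertex $v$ appears in every frame only as a ``pendant'' joined to $u$ and $w_i$; by the inductive hypothesis that the inner call yields outerplanar frames with $u$ and $w_i$ adjacent on the outer face, adjoining $v$ on the opposite side of the segment $uw_i$ preserves outerplanarity. Each frame is therefore either a single triangle (when a fresh $w_i$ appears after the previous siblings have fully disappeared) or an outerplanar frame produced by an inner call, extended by one pendant vertex contributed by each enclosing call.

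For the straight-line realisation I would place $u$ and $v$ in general position and, when $w_i$ is introduced, place it in a thin wedge on the side of segment $uv$ that is currently free (after $w_{i-1}$ and its subtrees have fully disappeared, exactly one of the two sides of $uv$ is again available). The inner recursion places its new vertices in progressively narrower sub-wedges around the current active edge, on the side opposite the current outer vertex. The main obstacle I expect is ensuring that the geometry stays coherent across all recursion levels, so that straight-line edges incident to ``grand-ancestor'' vertices placed many levels earlier do not cross edges introduced deep in the recursion; I would handle this by associating to each call a triangular ``playground'' region in which all of its descendants must be placed, and by splitting this region into sub-playgrounds in each recursive call. The bookkeeping on the clique tree of $G$ takes $O(n)$ time, and each vertex is introduced and placed with $O(1)$ additional work, giving the claimed linear running time.
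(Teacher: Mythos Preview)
Your edge-rooted recursion is essentially the same mechanism as the paper's: the paper builds a tree decomposition $T_{G,\sigma}$ of the $2$-tree from a stacking order and then takes the depth-first pre-order of $T_{G,\sigma}$ as the vertex order~$\sigma'$. Both schemes guarantee that every frame is (an induced subgraph of) a ``strip'' of consecutively stacked triangles, which is outerplanar; your $\textsc{Process}(H,uv)$ is just a direct recursive phrasing of that DFS. One minor imprecision in your plan: the precondition that $u$ and $v$ ``remain visible throughout the call'' need not hold (for instance, in the last iteration $u$ may become completed after $\textsc{Process}(L_k,uw_k)$ and hence be gone during $\textsc{Process}(R_k,vw_k)$), but this only shrinks the frame and does not hurt the outerplanarity argument.

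The one substantive simplification you are missing is in the straight-line realisation. Rather than maintaining nested triangular playgrounds and worrying about grand-ancestor edges crossing deep descendants, the paper fixes a planar embedding~$\mathcal{E}$ of the whole $2$-tree (implicitly defined by always stacking the next vertex into the current outer face), computes \emph{once} a planar straight-line drawing~$\Gamma$ of~$G$ with embedding~$\mathcal{E}$ in linear time via a standard algorithm, and then reuses the coordinates of~$\Gamma$ in every frame. Each frame is simply the induced subdrawing of~$\Gamma$, which is outerplanar by the choice of~$\mathcal{E}$. This sidesteps entirely the geometric-consistency obstacle you anticipate and makes the linear running time immediate.
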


\begin{proof}
	Due to \cref{prop:partial-graph}, it suffices to prove the
	statement for 2-trees.
	
	Let $G$ be a 2-tree. Hence, there exists a \emph{stacking order}
	$\sigma = \langle v_1, \dots, v_n \rangle$ of the vertex set
	of~$G$. In other words, $G$ can be constructed as follows: we start
	with $v_1, v_2, v_3$ forming a $K_3$ and then, for $i \ge 4$, $v_i$
	is \emph{stacked} on an edge $v_k v_\ell$ with $k, \ell < i$, that
	is, $v_i$ is connected to $v_k$ and $v_\ell$ by edges.  We claim
	that we can choose a vertex order $\sigma'$ and an embedding
	$\mathcal{E}$ of $G$ such that $\sigma'$ (together with
	$\mathcal{E}$) defines an outerplanar storyplan.
	Moreover, we can obtain a straight-line drawing of $G$ with
	embedding $\mathcal{E}$ in linear
	time~\cite{s-epgg-SODA90,fpp-hdpgg-Combin90}.  Let~$\Gamma$ be such
	a drawing.  For the outerplanar storyplan that we construct we use
	the positions of vertices and edges as in $\Gamma$.  This yields a
	straight-line storyplan.
	\cref{fig:2tree}(a) shows a 2-tree with a stacking order (that
	is not an outerplanar storyplan).
	
	\begin{figure}[tb]
		\centering \includegraphics{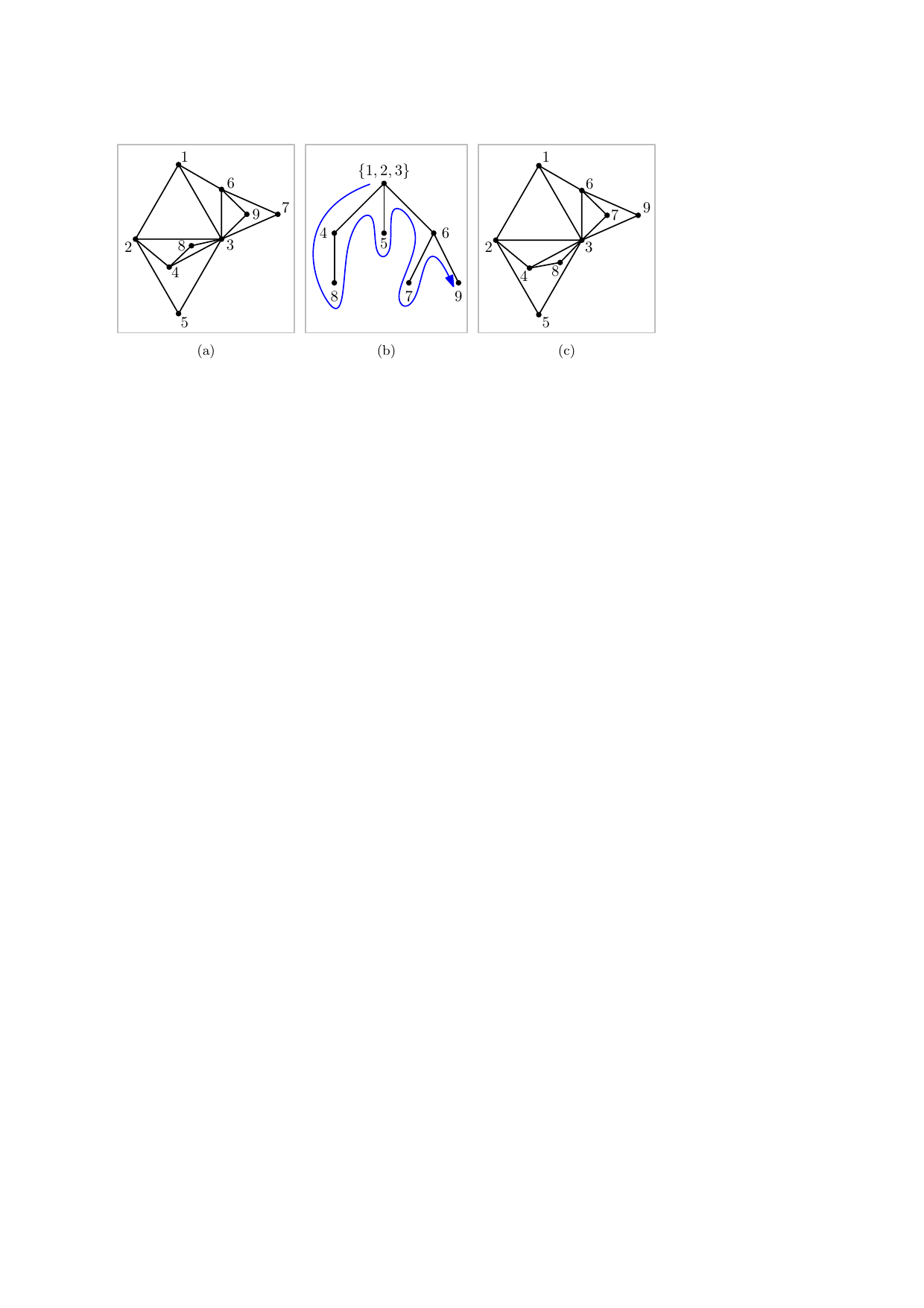}
		\caption{A 2-tree $G$ with a stacking order (a); its tree
			decomposition yields a vertex order
			$\sigma = \langle 1,2,3,4,8,5,6,7,9\rangle$~(b);
			and an embedding of~$G$ that together with~$\sigma$ defines an
			outerplanar storyplan (c).}
		\label{fig:2tree}
	\end{figure}
	
	To show that an outerplanar storyplan always exists, we create a
	tree decomposition~$T_{G,\sigma}$ of~$G$.  The root of~$T_{G,\sigma}$
	represents the triangle $\Delta v_1 v_2 v_3$ given by the first
	three vertices of~$\sigma$.  For $i = 4, 5, \dots$, let $v_i$ of
	$\sigma$ be stacked onto the edge $v_k v_\ell$ with
	$k < \ell < i$.  We add a node to $T_{G,\sigma}$ that represents
	$v_i$ and is a child of the node representing $v_\ell$.  Note that
	if $\ell \le 3$, then this new node is a child of the root.
	\cref{fig:2tree}(b) shows a tree decomposition of the 2-tree in
	\cref{fig:2tree}(a).
	
	From $T_{G,\sigma}$, we obtain a vertex order
	$\sigma' = \langle v'_1, v'_2, \dots, v'_n \rangle$ being an
	outerplanar storyplan as follows; see \cref{fig:2tree}(c).  Let
	$v_1'=v_1$, $v_2'=v_2$, and $v_3'=v_3$.  Now, we traverse the nodes
	of $T_{G,\sigma}$ in (depth-first) pre-order and add the represented
	vertices of~$G$ to~$\sigma'$.  We claim that for $\sigma'$, we can
	choose an embedding $\mathcal{E}$ (defined implicitly next) of $G$
	such that all frames are outerplanar.  Note that the first three
	vertices form a triangle, which always admits an outerplanar
	drawing.  Now consider $v_i'$ for $i = 4, 5, \dots$.  Our invariant
	is that, before the $i$-th frame starts, the parent $p$ of $v_i'$ in
	$T_{G,\sigma}$ has degree~2 in the current outerplanar drawing and
	lies on the outer face.  This implies that $v_i'$ can be added to
	the outer face because it is stacked onto an edge of the outer face
	resulting again in an outerplanar drawing.  Of course, for $i = 4$,
	our invariant is satisfied.  If $p = v_{i-1}'$, then our invariant
	is trivially satisfied.  Otherwise, let $p = v_j'$ for some
	$j < i - 1$.  Observe that, for $k \in \{j + 1, \dots, i-1\}$, each
	$v_k'$ will have disappeared by the end of the $(i-1)$-th
	frame. This is due to the fact that $v_k'$ is not an ancestor of
	$v_i$, which means that all of the neighbors of $v_k'$ have already
	been introduced to the storyplan due to the depth-first pre-order
	traversal.  Essentially, every frame given by $\sigma'$ shows a
	subpath of $T_{G,\sigma}$, which is a sequence of stacked triangles
	admitting an outerplanar drawing.
\end{proof}

\begin{theorem}
	\label{thm:cubic-outerplanar}
	Every subcubic graph except $K_4$ admits a straight-line
	outerplanar storyplan with at most five edges in each frame,
	and such a storyplan can be computed in linear time.
\end{theorem}

\begin{proof}
	Due to \cref{prop:partial-graph}, it suffices to prove the
	statement for cubic graphs.
	
	We can assume that the given cubic graph $G$ (which is not $K_4$)
	is connected; otherwise we consider each connected component
	separately. For an outerplanar storyplan, we will order the vertices
	$v_1, \dots, v_n$ of $G$ such that the resulting sequence of
	graphs $\langle G_1, G'_1 \dots, G_{n-1}, G'_{n-1}, G_n \rangle$ %
	has the following property: for~$4 \le i \le n-1$, $G'_i$ has at
	most two edges. Only for $i=3$, $G'_i$ may be a triangle and would
	thus contain three edges. 
	Then we show how to obtain outerplanar drawings
	$D_1, D'_1 \dots, D_{n-1}, D'_{n-1}, D_n$ of the graphs
	$G_1, G'_1 \dots, G_{n-1}, G'_{n-1}, G_n$, respectively.
	For $i \in [n]$, let $H_{i} = G[\{v_1, \dots, v_{i}\}]$.

	We pick the first vertex $v_1$ arbitrarily.  For $1 < i \le n$, let
	$v$ denote a vertex of $G'_{i-1}$ with maximum degree in $H_{i-1}$. If 
	there are more choices, let $v$ additionally have maximum
	degree in $G'_{i-1}$.  We then select
	$v_i \in V(G)\setminus \{v_1, \dots, v_{i-1} \}$ as a neighbor of
	$v$ in $G$.  Note that $v$ always has such a neighbor, otherwise $v$
	would already be completed and, hence, would not be in $G'_{i-1}$.
	The intuition behind this choice is that we want to remove $v$
	from the drawing as soon as possible.
	
	We claim that, for $4 \le i \le n-1$, the graph $G'_i$ contains at
	most two edges. In addition, if $G'_i$ contains two edges, then
	these edges are both incident with $v_i$.  This would mean that, for
	$i \in [n]$, $G_i$ contains at most five edges.  Indeed, even if
	$G'_3$ has three edges (that is, $G'_3$ is a triangle; see
	\cref{storyplan:fig:cubic-outerplanar-special}), then $G_4$ still has at most
	five edges since $G$ is not $K_4$.  Clearly, $D_1$ and $D_2$ have at
	most two edges.

	\begin{figure}[tb]
		\centering \includegraphics{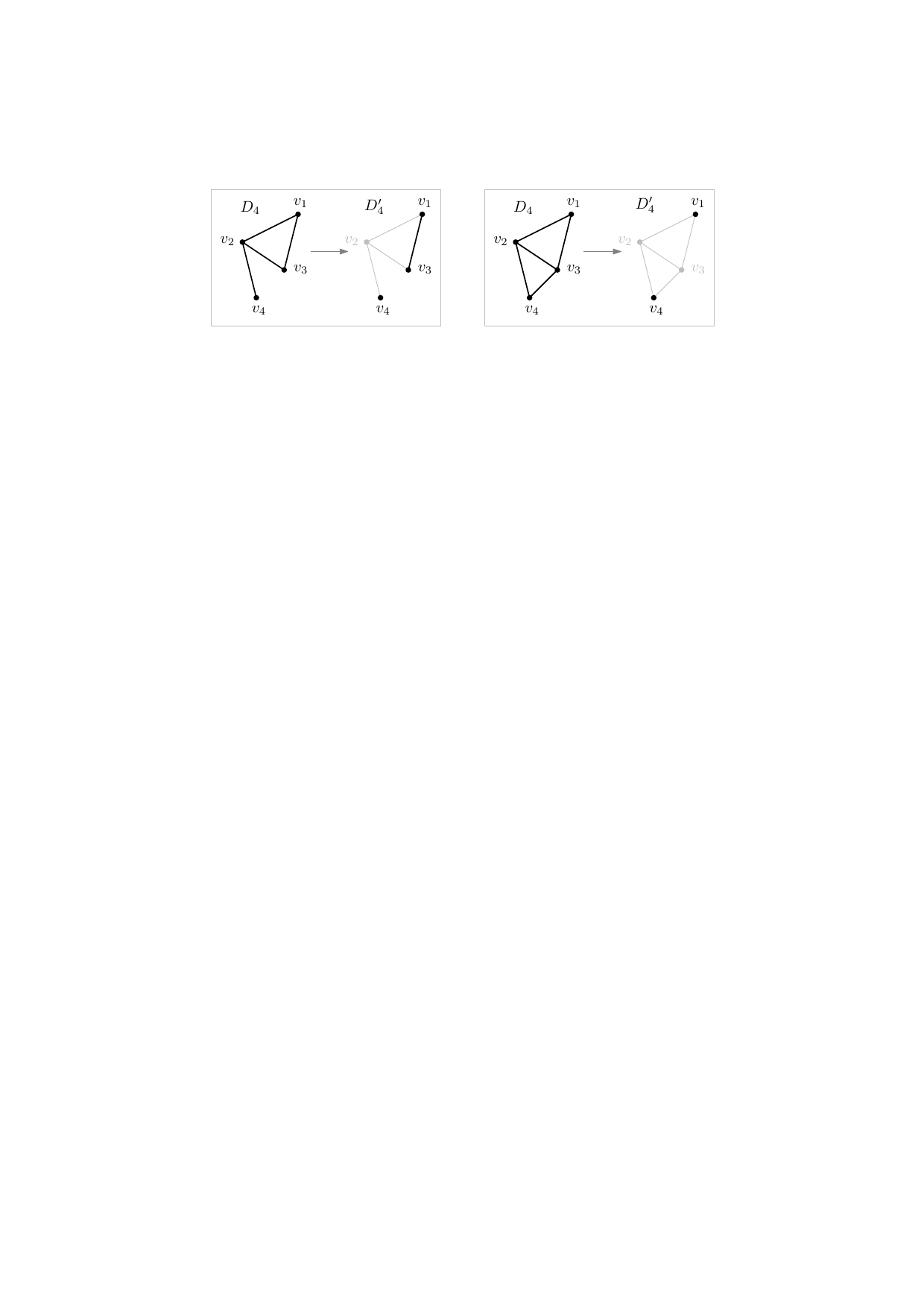}
		\caption{Illustration of the special case that $G'_3$ contains three
			edges. Then $v_4$ has one (left) or two (right) neighbors in
			$H_3$, but $G'_4$ has at most one edge. Grey vertices and edges
			are not part of the graphs, but were visible in the previous
			step.}
		\label{storyplan:fig:cubic-outerplanar-special}
	\end{figure}
	
	We consider three cases depending on the degree of~$v$ %
	in~$G'_{i-1}$; see \cref{fig:cubic-outerplanar}.
	
	\begin{enumerate}[(C1)]%
		\item Vertex~$v$ does not have any neighbors in~$G'_{i-1}$.
		By the choice of $v$, this implies that there are no edges
		in~$G'_{i-1}$ because $H_{i-1}$ is connected and, for an edge 
		in~$G'_{i-1}$, $H_{i-1}$ contains an incident degree-2 vertex.
		Note that all edges in $G_i$ are new and incident with
		$v_i$.  If $v_i$ has three neighbors in $G_i$, then $v_i$ will
		disappear, and there are no more edges in $G'_i$. Hence, $G'_i$ has
		at most two edges. Note that both edges are incident with~$v_i$.
		
		\item Vertex~$v$ has one neighbor in $G'_{i-1}$.
		If $v$ has degree~2 in $H_{i-1}$, then $v$ disappears in the next
		step and $G'_i$ does not contain it.  Since $v_i$ has at most one
		edge that stays in $G'_i$, the number of edges in $G'_i$ is not
		larger than in $G'_{i-1}$.  If $v$ has degree 1 in $H_{i-1}$, then,
		by construction, all other vertices in $G_{i-1}$ have also degree at
		most~1 in $H_{i-1}$.  Hence, $i=3$, that is, $v$ and its neighbor
		are the first two vertices that we introduced.
		
		\item Vertex~$v$ has two neighbors in $G'_{i-1}$.
		In this case, the two edges incident with~$v$ are the only edges
		in~$G'_{i-1}$. Then $v$
		disappears as $v_i$ is its last neighbor. Therefore, $G'_i$ contains at most one edge that $v_i$
		may have introduced.
	\end{enumerate}
	
	\begin{figure}[tb]
		\centering
		\captionsetup[subfigure]{justification=centering}
		\begin{subfigure}[t]{.3\linewidth}
			\centering
			\includegraphics[page=1]{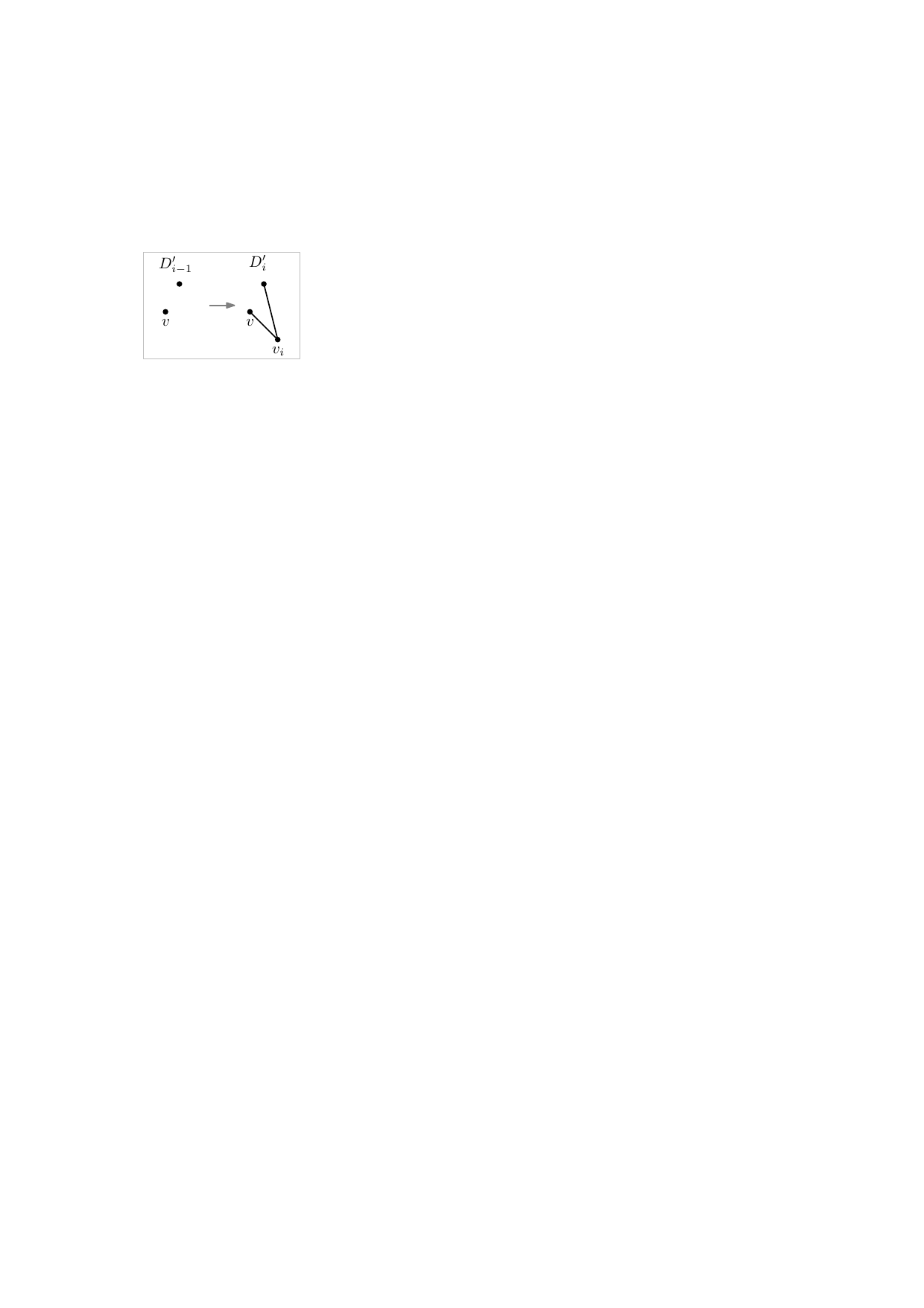}
			\caption{case (C1)}
			\label{fig:cubic-outerplanar1}
		\end{subfigure}
		\hfill
		\begin{subfigure}[t]{.3\linewidth}
			\centering \includegraphics[page=2]{cubic-outerplanar.pdf}
			\caption{case (C2)}
			\label{fig:cubic-outerplanar2}
		\end{subfigure}
		\hfill
		\begin{subfigure}[t]{.3\linewidth}
			\centering \includegraphics[page=3]{cubic-outerplanar.pdf}
			\caption{case (C3)}
			\label{fig:cubic-outerplanar3}
		\end{subfigure}
		\caption{Cases considered in the proof of
			\cref{thm:cubic-outerplanar}. In all of them, the number of
			edges in $G'_i$ is maximized. Gray vertices and edges
			were visible in some previous steps.}
		\label{fig:cubic-outerplanar}
	\end{figure}
	
	We have shown that, in each case, the number of visible edges in
	$G'_i$, for $4 \le i \le n-1$, is at most two.  Note that, if there
	are two edges, then they share an endpoint.

	Now we show that we can always find a position of the
	vertices such that the drawings $D_1, D'_1, \dots, D'_{n-1}, D_n$
	of the graphs $G_1, G'_1, \dots, G'_{n-1}, G_n$, respectively, are
	outerplanar and straight-line.
	
	It is obvious, that the drawings $D_1, D'_1, D_2, D'_2, D_3, D'_3$ are
	outerplanar even if they are straight-line.
	Now we want to show, for $i \in \{4,\dots,n\}$, how to place $v_i$ such
	that $D_{i}$ is outerplanar and straight-line.  
	Let $v_i$ be connected to three visible vertices $u, v, w \in V(G'_{i-1})$.  
	The other cases are easier and are covered by this case.%

	We know that there are at most two edges in $G'_{i-1}$. 
	If there are edges, we may assume that they are connected to~$v$.  
	We consider the case where there are exactly two edges~$(v,v')$ and~$(v,v'')$ in $G'_{i-1}$; the other cases are covered by this one.
	Note that possibly $u$ is $v'$ and/or $w$ is~$v''$.  
	In this case, the triangle $\triangle uvv_i$ and/or the triangle $\triangle vwv_i$ would appear in~$D_i$.  
	In any case, we place $v_i$ in the vicinity of~$v$ such that none of the new
	edges intersects the visible ones and no visible vertex lies in one of the
	triangles $\triangle uvv_i$ and $\triangle vwv_i$ that we have potentially created.  
	It is easy to see that such a placement exists.
	
	Now assume that $u$, $w$, $v'$, and $v''$ are pairwise different vertices.  
	In this case $G_i$ is a tree, and we need to avoid only edge intersections.
	Let $\ell_{v'}$ and $\ell_{v''}$ be rays from $v$ that contain $v'$
	and~$v''$, respectively.  There are three cases: $\triangle vuw$
	intersects neither $\ell_{v'}$ nor $\ell_{v''}$, it intersects
	both, or it intersects exactly one of them.
	
	\begin{figure}[t]
		\centering
		\begin{subfigure}[t]{.48\linewidth}
			\centering \includegraphics[page=1]{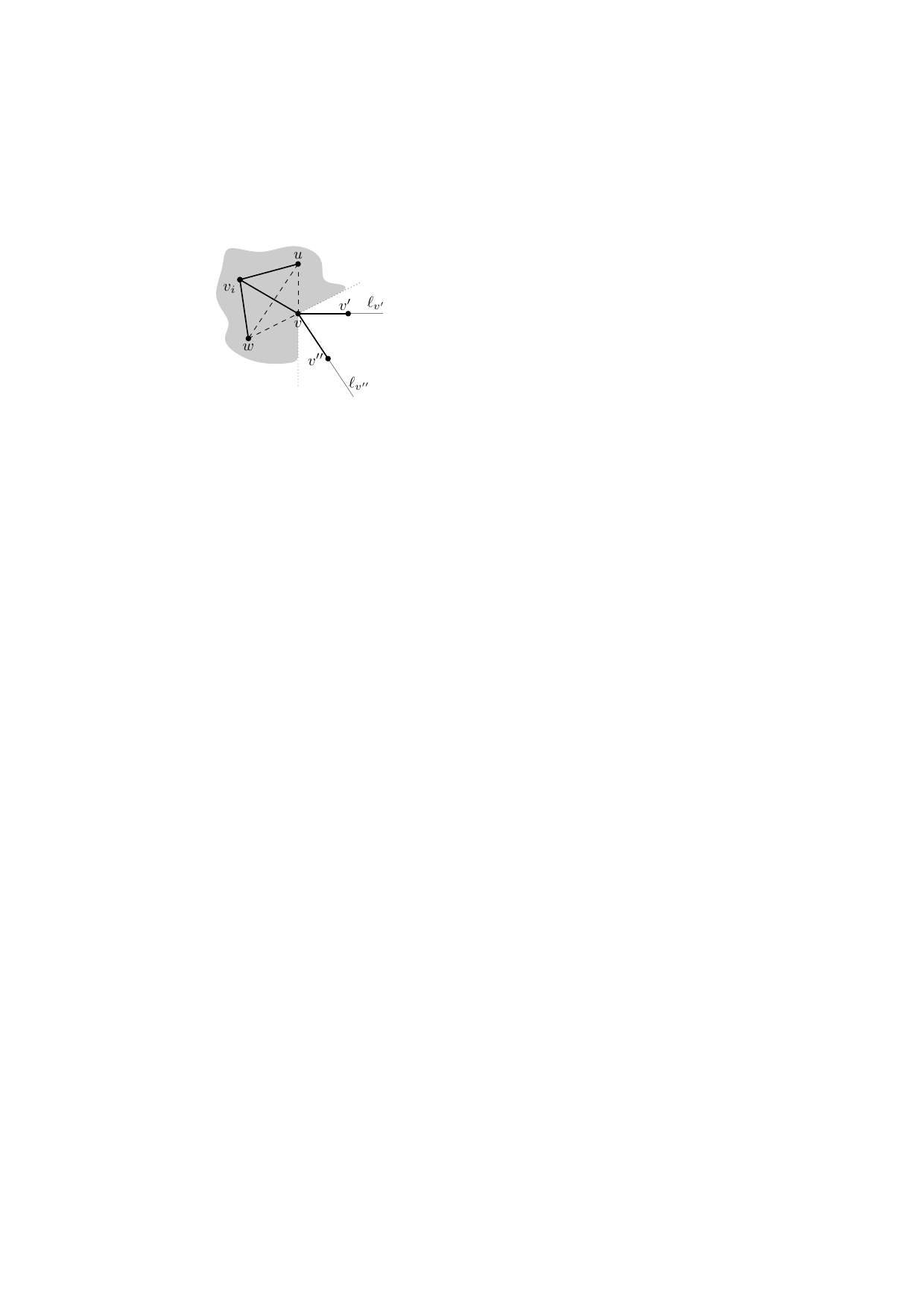}
			\caption{$\triangle vuw$ intersects neither $\ell_{v'}$ nor
				$\ell_{v''}$. %
			}
			\label{fig:cubic-straight-line-proof1}
		\end{subfigure}
		\hfill
		\begin{subfigure}[t]{.48\linewidth}
			\centering \includegraphics[page=2]{cubic-straight-line-proof}
			\caption{$\triangle vuw$ intersects both $\ell_{v'}$ and $\ell_{v''}$.}
			\label{fig:cubic-straight-line-proof2}
		\end{subfigure}
		\hfill
		\begin{subfigure}[t]{.48\linewidth}
			\centering \includegraphics[page=4]{cubic-straight-line-proof}
			\caption{$\triangle vuw$ intersects only $\ell_{v'}$, $uw$ does not
				intersect $(v,v')$.}
			\label{fig:cubic-straight-line-proof3}
		\end{subfigure}
		\hfill
		\begin{subfigure}[t]{.48\linewidth}
			\centering \includegraphics[page=3]{cubic-straight-line-proof}
			\caption{$\triangle vuw$ intersects only $\ell_{v'}$, $(u,w)$
				intersects~$(v,v')$.}
			\label{fig:cubic-straight-line-proof4}
		\end{subfigure}
		\caption{Various cases of placing $v_i$ to get a straight-line
			outerplanar drawing~$D_i$.}
		\label{fig:cubic-straight-line-proof}
	\end{figure}
	
	In the first two cases, we place $v_i$ in the union (in the first case) or in the intersection (in the second case) of the
	open halfplanes bounded by the lines through $(v,u)$ and $(v,w)$
	that contain neither $v'$ nor~$v''$, see
	\cref{fig:cubic-straight-line-proof1}
	and \cref{fig:cubic-straight-line-proof2}.  Now assume that $\triangle vuw$
	intersects one of the rays, say,~$\ell_{v'}$.  
	Let $\ell_u$ be the line that goes through~$u$ and~$v'$, and let~$\ell_w$ be the line that goes through~$w$ and~$v'$.  
	Let $H_u$ and $H_w$ be the halfplanes
	bounded by~$\ell_u$ and~$\ell_w$, respectively, that do not contain~$v$.  
	Let $H_{v''}$ be the halfplane bounded by $\ell_{v''}$ that contains $v'$.  
	If $(u,w)$ does not intersect $(v,v')$, we can place $v_i$ in
	$((H_u \cup H_w) \cap H_{v''}) \setminus \ell_{v'}$, see
	\cref{fig:cubic-straight-line-proof3}.  If $(u,w)$ intersects
	$(v,v')$, we place $v_i$ in
	$(H_u \cap H_w \cap H_{v''}) \setminus \ell_{v'}$, see
	\cref{fig:cubic-straight-line-proof4}.  Note that we always
	can find a position for $v_i$ such that none of the new edges
	contains a vertex visible in~$D_i$.  Hence, we have shown that there is a position
	for $v_i$ such that the drawing $D_i$ is straight-line and
	outerplanar.
	
	To argue the linear runtime, we note that the choice of $v_i$
	and update of~$H_i$ can be performed in amortized constant time
	by using a suitable data structure~\cite{MatulaBeck83}. The
	other steps of our construction require constant time for
	each~$v_i$.
\end{proof}

\section{Forest Storyplans}
\label{sec:forest}

Clearly, any triangle is an obstruction for a graph to admit a forest
storyplan.  Interestingly, for planar and subcubic graphs this is the
only obstruction for the existence of a forest storyplan as we show
now.

\begin{theorem}
	\label{thm:subcubic}
	Every $\triangle$-free subcubic graph admits a straight-line forest
	storyplan.  Such a storyplan can be computed in linear time and
	has at most five edges per frame.
\end{theorem}

\begin{proof}
	Due to \cref{prop:partial-graph}, it suffices to prove the
	statement for $\triangle$-free cubic graphs.
	Recall the proof of \cref{thm:cubic-outerplanar}.
	In that proof, we showed that, for $4 \le i \le n-1$, there are at
	most two edges in~$G'_i$, and two edges may appear only if they
	share a vertex.  In such a case, we always pick a shared vertex as a
	neighbor of the next vertex.  Since our graph is $\triangle$-free,
	adding a vertex will never make a cycle visible.  Moreover,
	$G'_3$ cannot be a triangle and, thus, for every~$i \in [n-1]$,
	$G_i'$ contains at most two edges.
\end{proof}

As a warm-up for our main result, %
we briefly show the following weaker result.

\begin{observation}
	\label{obs:outerplanar}
	Every $\triangle$-free outerplanar graph admits a straight-line forest
	storyplan, and such a storyplan can be computed in linear time.
\end{observation}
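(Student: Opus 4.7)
\medskip\noindent\textbf{Proof plan.}
My plan is to use a peeling-based induction that exploits both the 2-degeneracy of outerplanar graphs and the triangle-free hypothesis. I would first reduce to the case of a connected graph by handling each component separately. Then, at each step, I would peel a vertex $v$ of degree at most two from the current graph; such a vertex always exists in a non-empty outerplanar graph. The storyplan order $v_1, \ldots, v_n$ is the reverse of the peeling order, so each $v_i$ has at most two previously-added neighbors when it appears at step~$i$.

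To show every frame is a forest, I would argue inductively that $G_{i-1}'$ is a forest and that adding $v_i$ cannot close a cycle. If $v_i$ has at most one visible back-neighbor, this is immediate. If $v_i$ has two visible back-neighbors $u$ and $w$, triangle-freeness guarantees $uw \notin E(G)$, so $v_i$ together with its two edges forms a path $u$-$v_i$-$w$; the only way a cycle could arise is through an alternate $u$-$w$ path inside $G_{i-1}'$. I would prevent this by choosing the peeling vertex carefully: at every stage pick a degree-$\le 2$ vertex that is extremal with respect to the outerplanar structure --- for example a private degree-2 vertex of a leaf face in the weak dual of a 2-connected block of the current graph, or a pendant of a bridge in the non-2-connected case. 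With such a choice, any alternate $u$-$w$ path in $G-v$ uses only vertices that, in the resulting storyplan, complete their lifespan strictly before step $i$, so they are no longer present in $G_{i-1}'$.

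The main obstacle will be formalizing this extremal-choice lemma, namely showing that at every stage of the peeling such a vertex exists and that the alternate $u$-$w$ path between its two neighbors lies entirely in the portion of the graph whose vertices have already disappeared by step~$i$. This argument relies on the non-crossing (nesting) property of chords in an outerplanar embedding together with the block-cut decomposition. Once this is established, the straight-line drawing and linear running time follow by the same ideas as in the proof of \cref{thm:cubic-outerplanar}: at each step the new vertex can be placed so that its (at most two) edges to visible vertices are drawn straight and cross-free within the current forest, and the peeling order can be maintained in amortized constant time per vertex using standard degeneracy-ordering data structures.
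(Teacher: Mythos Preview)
Your approach is genuinely different from the paper's, and the gap you yourself flag---the ``extremal-choice lemma''---is real and is essentially the entire content of the proof in your framework; you have not closed it. Saying ``pick a private degree-2 vertex of a leaf face of the weak dual'' is a reasonable heuristic, but you still owe an argument that when $v_i$ is added with back-neighbors $u$ and $w$, every alternate $u$--$w$ path in $G$ has already lost at least one vertex from the visible set. That claim depends on the global interaction of the peeling order with lifespans of vertices several layers deep, and your sketch does not supply it. Until that lemma is written down and verified, the proposal is a plan, not a proof.

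By contrast, the paper's argument is a few lines and avoids the inductive bookkeeping altogether. It fixes an outerplanar straight-line drawing~$\Gamma$, takes $\sigma=\langle v_1,\dots,v_n\rangle$ to be the \emph{circular order along the outer face}, and uses the vertex positions of~$\Gamma$. The key observation is that in an outerplanar graph every cycle contains the vertex set of some inner face, so it suffices to show that no inner face is ever fully visible. For a face $F=\langle v_{i_1},\dots,v_{i_k}\rangle$ with $i_1<\dots<i_k$ and $k\ge 4$ (triangle-freeness), outerplanarity forces all neighbors of $v_{i_2}$ to lie between $v_{i_1}$ and $v_{i_3}$ in~$\sigma$; hence $v_{i_2}$ disappears before $v_{i_4}$ appears. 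That is the whole proof. Note that this ordering is \emph{not} a reverse 2-degeneracy peeling (a vertex may have many earlier neighbors), so the paper's argument is not a special case of yours; it trades your local degree bound for a global nesting property that makes the cycle-freeness immediate.
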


\begin{proof}
	Let~$G$ be a $\triangle$-free outerplanar graph, and let~$\Gamma$ be an
	outerplanar straight-line drawing of $G$.
	Let~$\sigma = \langle v_1, v_2, \dots, v_n \rangle$ be the circular
	order of the vertices along the outer face of~$\Gamma$ (which can
	easily be determined in linear time%
	~\cite{BattistaFrati2006}%
	).  We claim that $\sigma$ yields
	a forest storyplan of~$G$.  (Note that the positions of the vertices
	in~$\Gamma$ will make this storyplan straight-line.)
	
	To this end, we show that there is no frame where a complete face of
	$\Gamma$ is visible.  If this is true, then no frame contains a
	complete cycle.  This is due to the fact that, in outerplanar
	graphs, the vertex set of every cycle contains the vertex set of at
	least one face.
	Let~$F = \langle v_{i_1}, v_{i_2}, \dots, v_{i_k} \rangle$ with
	$i_1 < i_2 < \dots < i_k$ be a face of~$G$.  Since $G$ is
	$\triangle$-free, we have $k \ge 4$.  Note that $v_{i_1}$ and $v_{i_3}$
	as well as $v_{i_2}$ and $v_{i_4}$ are not adjacent.  Since $G$ is
	outerplanar, $v_{i_2}$ may be adjacent only to vertices that appear
	in~$\sigma$ between (and including) $v_{i_1}$ and~$v_{i_3}$.  Therefore,
	$v_{i_2}$ disappears before $v_{i_4}$ appears.  Hence it is indeed
	not possible that all vertices of the same face appear in a frame.
\end{proof}

Now we improve upon the simple result above.
Note, however, that we do not guarantee a linear running time any more.

\begin{theorem}
	\label{thm:planar}
	Every $\triangle$-free planar graph admits a straight-line forest
	storyplan, and such a storyplan can be computed in polynomial time.
\end{theorem}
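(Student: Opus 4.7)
My plan is to find a vertex ordering of $G$ such that at every step the subgraph induced by the currently visible vertices is acyclic; combined with a planar straight-line drawing of $G$ (which exists by F\'ary's theorem), this immediately yields a straight-line forest storyplan. Triangle-freeness will ensure every cycle of $G$ has length at least~$4$, which provides the slack that the ordering argument exploits.

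To build the ordering, I would generalize the approach of \cref{obs:outerplanar} by shelling a planar embedding $\mathcal{E}$ of $G$. Let $L_1$ be the set of vertices on the outer face of $\mathcal{E}$; remove $L_1$ and let $L_2$ be the outer-face vertices of the remaining plane graph; iterate to obtain a partition $V(G)=L_1\cup L_2\cup\dots\cup L_k$. Introduce the vertices layer by layer, and within each layer~$L_i$ follow the cyclic order along the outer boundary walk of the $i$-th peeled graph. The positions used in every frame are those of a planar straight-line drawing of $G$ realizing $\mathcal{E}$.

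For correctness I would show that no cycle of $G$ is ever fully visible. By planarity, a vertex $v\in L_i$ has all its neighbors in $L_{i-1}\cup L_i\cup L_{i+1}$, so its lifespan ends by the end of the processing of $L_{i+1}$. Hence any cycle $C$ meeting three or more layers is broken: a vertex of $C\cap L_a$ in the lowest layer disappears strictly before any vertex of $C\cap L_b$ in the highest layer appears whenever $b-a\ge 2$. For a cycle contained in a single layer $L_i$, the cyclic boundary order together with $|C|\ge 4$ runs the outerplanar argument of \cref{obs:outerplanar} on that boundary walk. For a cycle zigzagging between two consecutive layers $L_i$ and $L_{i+1}$, I would exploit the interaction of the cyclic order on $L_i$ with the positions at which cycle vertices of $L_{i+1}$ attach, forcing some $L_i$-vertex of $C$ to complete before the last $L_{i+1}$-vertex of $C$ appears.

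The main obstacle will be precisely this last case. Within a single layer the induced subgraph $G[L_i]$ need not be outerplanar, and cycles that alternate between two adjacent layers can wind around the peeled embedding in intricate ways. Pinning down the within-layer tie-breaking rule that, together with triangle-freeness, breaks every such cycle is the technical heart of the proof, and presumably accounts for the polynomial rather than linear running time.
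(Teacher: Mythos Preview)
Your layer-peeling idea diverges substantially from the paper's argument, and the single-layer case---which you treat as already settled by \cref{obs:outerplanar}---is in fact broken.  The outerplanar argument works because the ``second'' vertex $v_{i_2}$ of a face disappears once all of its neighbours \emph{in the outerplanar graph} have appeared.  In your setting the vertices of $L_i$ may also have neighbours in $L_{i+1}$, so they need not disappear during the processing of $L_i$ at all.  A tiny example already kills the reduction: take the $4$-cycle $v_1v_2v_3v_4$ with one pendant vertex $u$ drawn \emph{inside} the cycle and adjacent to $v_2$.  Then $L_1=\{v_1,v_2,v_3,v_4\}$, $L_2=\{u\}$, and if you introduce $L_1$ in cyclic order, at step~$4$ all four boundary vertices are simultaneously visible (since $v_2$ is kept alive by its neighbour $u\in L_2$), so the $4$-cycle is fully visible.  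The same phenomenon occurs whenever a chord of the outer face bounds a short inner face one of whose vertices has an $L_{i+1}$-neighbour; this is precisely the ``chord/half-chord'' obstruction the paper spends most of its effort on.  Your two-layer case, which you flag as the hard part, inherits the same difficulty and you do not propose a concrete rule to resolve it.

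The paper takes a different route.  Instead of a fixed layer order, it repeatedly \emph{picks} a single outer-face vertex $v$ and immediately introduces all of $N[v]$, so that $v$ disappears at the end of the iteration and the outer face strictly grows.  The delicate point is choosing $v$ so that adding $N[v]$ never completes a cycle; the obstructions are exactly chords and ``half-chords'' (length-$2$ paths through an inner vertex joining two outer vertices).  Four explicit rules (R1)--(R4) forbid bad picks, and the technical core is proving that a vertex satisfying all rules always exists.  This is done via the weak dual $H_i$ of an auxiliary graph $G_i'$ (outer boundary plus chords and half-chords), showing $H_i$ is an outerplane cactus, and using its block--cut structure to locate a face of $G_i'$ bounded by a single half-chord in which a good vertex can be found.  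If you want to salvage a layer-based approach you would need, at minimum, an adaptive within-layer order that anticipates which vertices are pinned alive by $L_{i+1}$-neighbours; the paper's chord/half-chord machinery is essentially what such a rule would have to encode.
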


\begin{proof}
	Let $G$ be a $\triangle$-free planar graph, and let $\Gamma$ be a
	planar straight-line drawing of $G$.
	In the desired forest storyplan for~$G$, we use the position of the
	vertices in $\Gamma$.
	
	We first give a rough outline
	of our iterative algorithm and then describe the details.  In each
	iteration (which spans one or more steps of the storyplan that we construct),
	we \emph{pick} a vertex on the current outer face, which means
	that we add it and its neighbors (if they are not visible yet) to
	the storyplan one by one.  In this way, after each iteration,
	at least one vertex disappears, namely the one we picked.
	
	Let $G_1=G$ and, for $i \in \{1,2,\dots\}$, let $v_i$ be the vertex
	that we pick in iteration~$i$, and let $G_{i+1}$ be the subgraph of~$G_{i}$
	that we obtain after removing the vertices (and the edges incident
	to them) that disappear in %
	iteration~$i$; see
	\cref{fig:triangle-free-planar-iteration2}.  The algorithm
	terminates as soon as $G_i$ is a forest and adds the remaining
	vertices in arbitrary order to the storyplan under construction.
	We call vertices and edges 
	incident with the (current)
	outer face \emph{outer}. The others are \emph{inner}.
	
	We always pick outer vertices.  For this reason,
	only two types of vertices are problematic for avoiding cycles: the endpoints of
	\emph{chords} (i.e., inner edges incident with two outer vertices) and the endpoints of \emph{half-chords} (i.e., length-2 paths that connect two outer vertices via an inner vertex).
	
	\begin{figure}[t]
		\centering
		\includegraphics{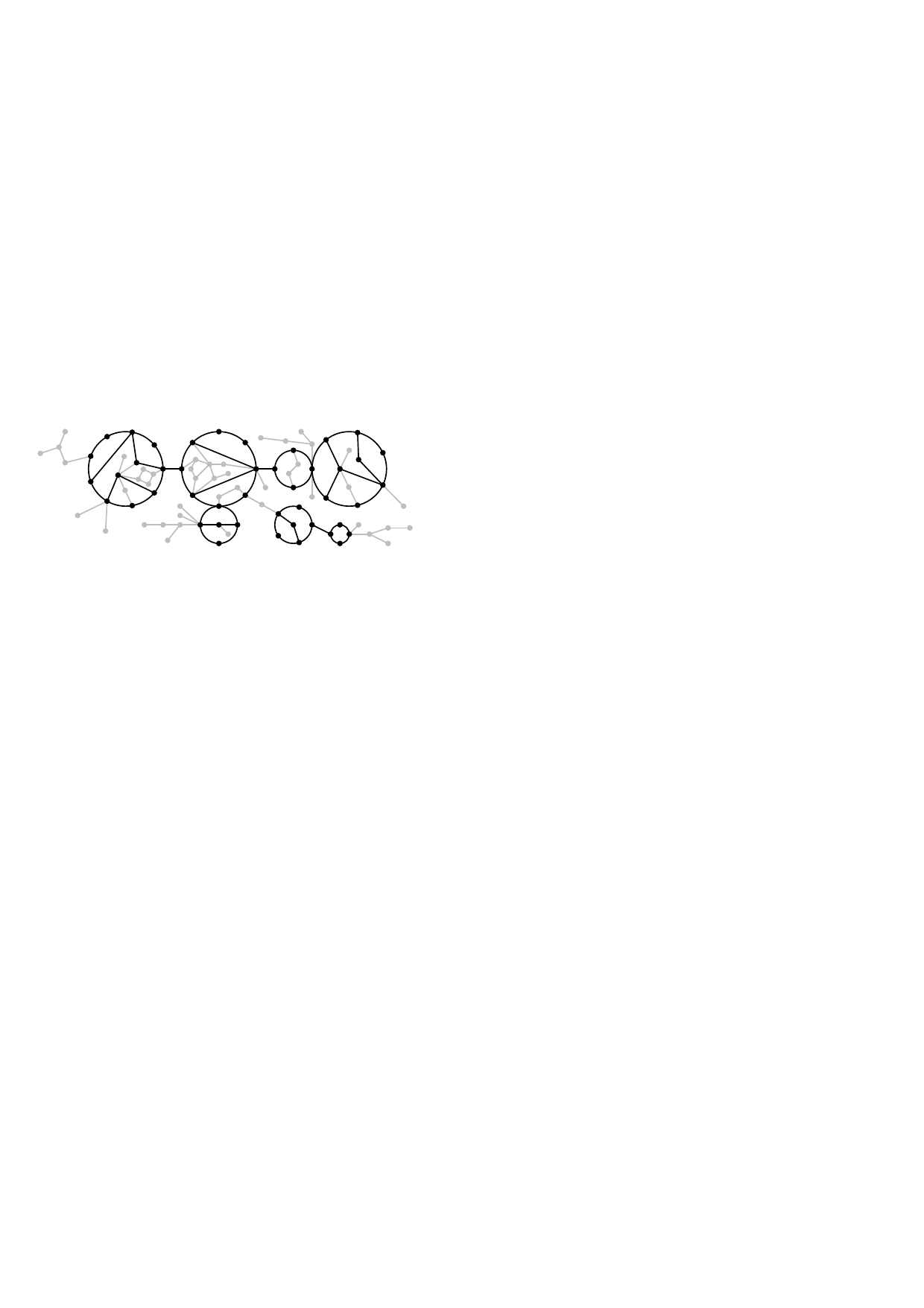}
		\caption{From an embedded $\triangle$-free planar graph $G_i$ (black \& gray),
			we obtain $G_i'$ (black).
			Note that $G_i'$ decomposes into seven simple cycles and two connected components.
			The outer edges and vertices of these connected components form cactus graphs.}
		\label{fig:gi-prime}
	\end{figure}
	
	Let~$G_i'$ be the (embedded) subgraph of~$G_i$ (embedded according
	to~$\Gamma$) that consists of all vertices and edges that lie on a
	simple cycle that bounds the outer face of~$G_i$,
	plus every edge that connects two cycles,
	plus all chords and half-chords (and, thus, plus the
	inner vertices that lie on the half-chords) of~$G_i$; see
	\cref{fig:gi-prime,fig:triangle-free-planar-dual}.  For example,
	the edges~$e$ and~$e'$ of~$G_2$ in
	\cref{fig:triangle-free-planar-iteration2} are not part of~$G_2'$.
	We say that a vertex of~$G_i'$ is \emph{free} if it lies on the
	outer face and is not part of a chord or a half-chord.
	
	Let $H_i$ be the weak dual of~$G_i'$ (see
	\cref{fig:triangle-free-planar-dual}), i.e., the (embedded)
	multigraph that has a vertex for each inner face of~$G_i'$ and an
	edge for each pair of inner faces that are incident with a common edge
	of~$G_i'$.  Note that $H_i$ is outerplane (since the inner vertices of $G_i'$ form an independent set) and that $H_i$ has no loops (since $G_i'$ does not
	have leaves).
	\begin{figure}[t]
		\centering
		\begin{subfigure}[t]{.31\linewidth}
			\centering \includegraphics[page=1]{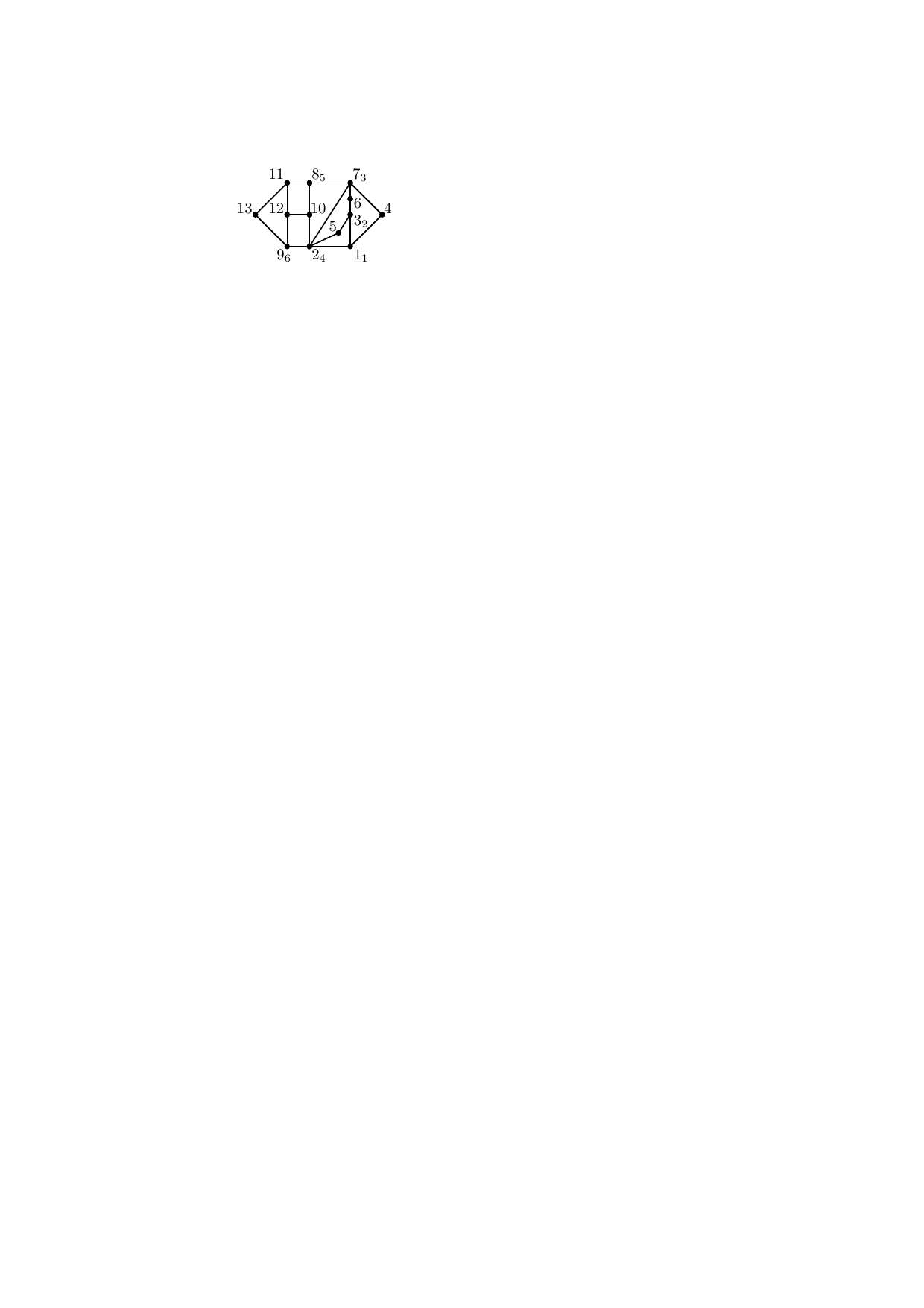}
			\caption{$\triangle$-free planar graph $G$ with a forest storyplan}
			\label{fig:triangle-free-planar-example}
		\end{subfigure}
		\hfill
		\begin{subfigure}[t]{.320\linewidth}
			\centering \includegraphics[page=2]{triangle-free-planar-algo}
			\caption{iteration~2: subgraph~$G_2$ (black) of $G_1 = G$}
			\label{fig:triangle-free-planar-iteration2}
		\end{subfigure}
		\hfill
		\begin{subfigure}[t]{.305\linewidth}
			\centering \includegraphics[page=3]{triangle-free-planar-algo}
			\caption{subgraph~$G_2'$ of~$G_2$ with weak dual~$H_2$
				(green)}
			\label{fig:triangle-free-planar-dual}
		\end{subfigure}
		\caption{A $\triangle$-free planar graph $G$
			where (a) shows a forest storyplan
			computed by our algorithm, (b) shows the
			result of the first iteration of the algorithm,
			and (c) shows the auxiliary graph for the second iteration.
			Subscripts refer to the iteration in which a vertex is picked. 
			Red crosses mark vertices that may not be picked.
		}
		\label{fig:triangle-free-planar-algo}
	\end{figure}
	We maintain the following invariants:
	\begin{enumerate}[({I}1)]%
		\item \label{inv:cycle}%
		At no point in time, the set of visible edges on the outer
		face forms a cycle.
		\item \label{inv:vertices}%
		{\em During} iteration~$i$, the only inner vertices that may be
		visible are those that are adjacent to~$v_i$ and to no other
		visible vertex on the outer face. 
		\item \label{inv:edges}%
		{\em During} iteration~$i$, the only inner edges that may be
		visible are those that are incident with~$v_i$ and to no other
		vertex on the outer face.
		\item \label{inv:visible}%
		{\em At the end} of each iteration (after removing the vertices
		that are not visible any more and before picking a new one), only
		vertices and edges incident with the outer face are visible.
	\end{enumerate}
	
	Obviously, if the invariants hold, the set of visible edges in each
	frame forms a forest.  In order to guarantee that the invariants
	hold, we use the following rules that determine which vertices we
	may not pick; see \cref{fig:rules-planar-proof}.
	We call a vertex observing these rules \emph{good}.
	Note that we always pick a good vertex
	on the outer face of $G_i'$~--
	we will later argue that there always is one.
	\begin{enumerate}[{Rule} 1:]%
		\item \label{rule:invisible}%
		Do not pick a vertex~$v$ whose extended 
		neighborhood~$N[v]=\{v\} \cup \{ u \colon uv \in E(G_i) \}$
		contains all invisible vertices of the outer face of~$G_i'$.
		\item \label{rule:chord}%
		Do not pick an endpoint of a chord.
		\item \label{rule:neighbor}%
		Do not pick a neighbor of an endpoint of a chord if the other
		endpoint of that chord is visible.
		\item \label{rule:half-chord}%
		Do not pick an endpoint of a half-chord if the other endpoint
		is visible.
	\end{enumerate}
	
	\begin{figure}[t]
		\centering
		\begin{subfigure}[t]{.24\linewidth}
			\centering \includegraphics[page=1]{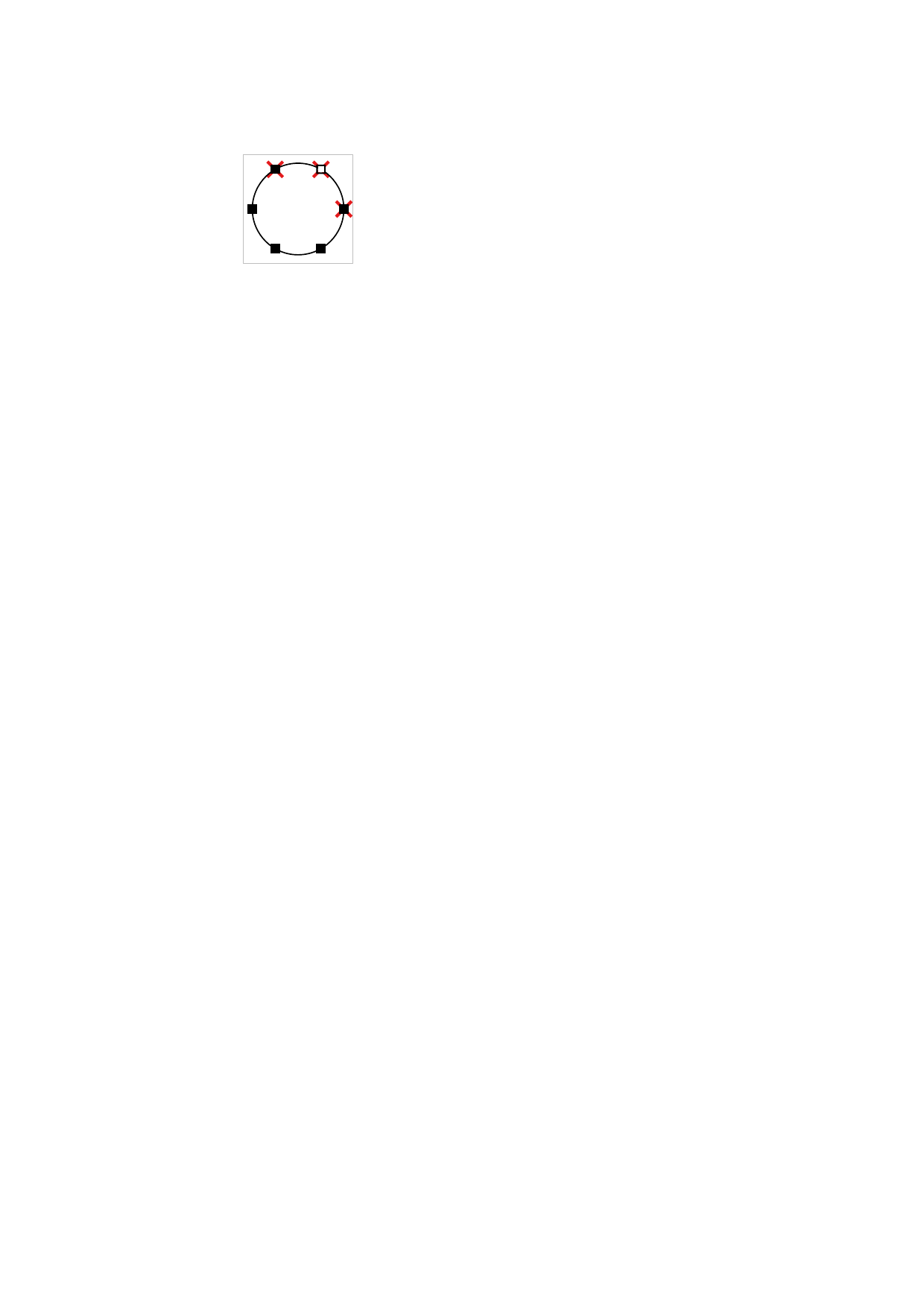}
			\caption{rule~\ref{rule:invisible}}
			\label{fig:rule1-planar-proof}
		\end{subfigure}
		\hfill
		\begin{subfigure}[t]{.24\linewidth}
			\centering \includegraphics[page=2]{rules-planar-proof}
			\caption{rule~\ref{rule:chord}}
			\label{fig:rule2-planar-proof}
		\end{subfigure}
		\hfill
		\begin{subfigure}[t]{.24\linewidth}
			\centering \includegraphics[page=4]{rules-planar-proof}
			\caption{rule~\ref{rule:neighbor}}
			\label{fig:rule3-planar-proof}
		\end{subfigure}
		\hfill
		\begin{subfigure}[t]{.24\linewidth}
			\centering \includegraphics[page=3]{rules-planar-proof}
			\caption{rule~\ref{rule:half-chord}}
			\label{fig:rule4-planar-proof}
		\end{subfigure}
		\caption{Rules that determine which vertices may not be picked
			(marked by red crosses). Black squares represent visible
			vertices, white squares represent invisible vertices, and gray
			disks represent vertices that may be visible or invisible.%
		}
		\label{fig:rules-planar-proof}
	\end{figure}

	Rule~\ref{rule:invisible} ensures that we do not close a cycle on
	the outer face, thus, invariant~\ref{inv:cycle} holds.
	Rule~\ref{rule:half-chord} ensures that none of the visible inner
	vertices is adjacent to two visible vertices on the outer face
	(including the picked vertex), thus, invariant~\ref{inv:vertices} holds.
	Rules~\ref{rule:chord} and~\ref{rule:neighbor} ensure that no chords
	are visible.  Together with rule~\ref{rule:half-chord} and the fact
	that $G$ is $\triangle$-free, they ensure that the inner edges
	that are visible are incident with the picked vertex
	and no other vertex on the outer face.
	Thus, invariant~\ref{inv:edges} holds.
	Invariant~\ref{inv:visible} holds because we always pick a vertex
	on the outer face and remove it.  As a result, the faces incident with
	the picked vertex become part of the outer face and the previously
	inner neighbors (if any) of the picked vertex become incident with the outer face.

	It remains to prove that, as long as $G_i$ is not a forest (and the
	algorithm terminates),
	there exists a vertex that can be picked without violating any of our rules.
	Our proof is constructive; we show how to find a vertex to pick.
	
	We first show that $H_i$ is a (collection of) cactus graph(s), that is,
	every edge of~$H_i$ lies on at most one cycle.
	Suppose that $H_i$ contains an
	edge~$e_1$ that lies on at least two simple cycles.  If the
	interiors of the two cycles are disjoint, then $e_1$ is not incident
	to the outer face of~$H_i$ (contradicting $H_i$ being outerplane).
	Otherwise, one of the cycles has at
	least one edge~$e_2 \ne e_1$ in the interior of the other cycle, again contradicting $H_i$ being outerplane.

	We show in two steps that $G_i$ (actually even $G_i'$) always contains a
	good vertex, which we pick.
	First, we show how to find a good vertex in the base case, that is, if
	the outer face of $G_i'$ is a simple cycle.
	Then, we consider the general case where the outer face of $G_i'$ is
	a (collection of) cactus graph(s).
	Here, we repeatedly apply the argument of the base case to find a good vertex.
	So, assume that the outer face of $G_i'$ is a simple cycle and, hence,
	$H_i$ is connected.
	
	In the trivial case that 
	the weak dual 
	$H_i$ is a single vertex,
	$G_i'$ is a cycle of at least four free vertices.
	Due to invariant \ref{inv:cycle}, there is an invisible vertex~$v \in G_i'$.
	Any non-neighbor of~$v$ in $G_i'$ is a good vertex, which we can pick.
	
	If $H_i$ has a vertex of degree~1,
	which corresponds to a face~$f$ of~$G_i'$,
	it means that $f$ is incident with exactly one chord and to no half-chords.
	Since $G$ is $\triangle$-free, there are at least two free vertices
	in~$f$.  Note that at most one endpoint of the chord is visible
	(due to invariant~\ref{inv:edges}).  If one endpoint is indeed
	visible, then its unique neighbor on the boundary of~$f$ that is not
	incident with the chord observes all rules and can be picked. 
	If none of the endpoints
	of the chord is visible, then any free vertex of~$f$ can be picked.
	
	Otherwise, all vertices of $H_i$ have degree at least~2.
	Let $F$ be the set of faces of $G_i'$
	that are incident with exactly one half-chord
	and to an arbitrary number of outer edges (but to no other inner edge).
	Note that in $H_i$, $F$ corresponds to a set of vertices of degree~2.
	We now use the following two helpful claims, which we prove in the appendix.
	
	\begin{claim}
		\label{clm:FAtLeastTwo}
		The set $F$ has cardinality at least $2$.
	\end{claim}
	
	\begin{proof}
		Consider the \emph{block-cut tree} of $H_i$, that is, the tree that
		has a node for each cut vertex of~$H_i$ and a node for each
		2-connected component (called \emph{block}) of~$H_i$.  A block node
		and a cut-vertex node are connected by an edge in the block-cut tree
		if, in~$H_i$, the block contains the cut vertex.  Clearly, every
		leaf of the block-cut tree is a block node, and every block of~$H_i$
		is either a cycle or an edge.  Consider any leaf of the block-cut
		tree.  It is a block node representing a cycle of~$H_i$ as otherwise
		it would represent an edge of~$H_i$ having an endpoint that is a
		vertex of degree~1 in $H_i$.  Clearly, there is an inner vertex
		of~$G_i'$ being incident with all faces represented by that cycle
		and all edges of that cycle are then dual to half-chord edges of
		$G_i'$.  As each such cycle contains at least two vertices but at
		most one of them is a cut vertex in~$H_i$, there is, per leaf of the
		block cut tree, at least one vertex of degree two in $H_i$ whose
		incident edges are both dual to a half-chord of~$G_i'$.  The
		block-cut tree is either a single block node representing a cycle
		in~$H_i$, or the block-cut tree has at least two leaf nodes.  Thus,
		$|F| \ge 2$.
	\end{proof}
	
	\begin{claim}
		\label{clm:bothSidesF}
		Let the edge $e$ (or the edge pair $\{e_1, e_2\}$) be any chord (half-chord) of $G_i'$,
		let $F_1$ be the set of inner faces on the one side,
		and let $F_2$ be the set of inner faces on the other side of $e$ (or $\{e_1, e_2\}$, resp.).
		Then, $F_1 \cap F \ne \emptyset$ and $F_2 \cap F \ne \emptyset$.
	\end{claim}
	
	\begin{proof}
		If we have a chord $e$, then the dual edge of $e$ is an edge of
		$H_i$, which corresponds to a block node of the block-cut tree,
		which cannot be a leaf (because all leaves represent cycles).  Then,
		however, if we traverse the block-cut tree, in the one or the other
		direction of the edge, we will find a leaf in both parts and each
		leaf contains a vertex in~$F$ as shown in the proof
		of~\cref{clm:FAtLeastTwo}.
		
		If we have a half-chord $\{e_1, e_2\}$, then the dual edges divide a
		cycle~$C$ of~$H_i$ into two.  If one of the resulting parts of~$C$
		does not contain a cut vertex, then it contains only vertices from
		$F$ by definition.  Otherwise, the previous argument applies: the
		block-cut tree gets divided into two parts and each part needs to
		contain a leaf of the block-cut tree.
	\end{proof}
	
	We continue to show that there is a good vertex
	on the outer face of~$G_i'$, which we can pick.
	Assume first that $G_i'$ does not have chords.
	Thus, all vertices of $G_i'$ trivially observe rules~\ref{rule:chord} and~\ref{rule:neighbor}.
	Let $f \in F$, and let $u$ and $w$ be the endpoints of the unique half-chord incident with~$f$.
	If there is a free vertex~$v$ in~$f$ such that $N[v]$ does not contain the last invisible
	vertices of the outer face of~$G_i'$, then we pick~$v$. Rules~\ref{rule:invisible} and
	\ref{rule:half-chord} are observed by the definition of~$v$.
	If, for every free vertex~$v$ in $f$, $N[v]$ contains all invisible vertices of
	the outer face of~$G_i'$, consider the following three cases; see \cref{fig:planar-correctness}.
	The cases are ordered by priority; if we fulfill
	the conditions of multiple cases, the first case applies.
	
	\begin{figure}[t]
		\centering
		\begin{subfigure}[t]{.32\linewidth}
			\centering \includegraphics[page=1]{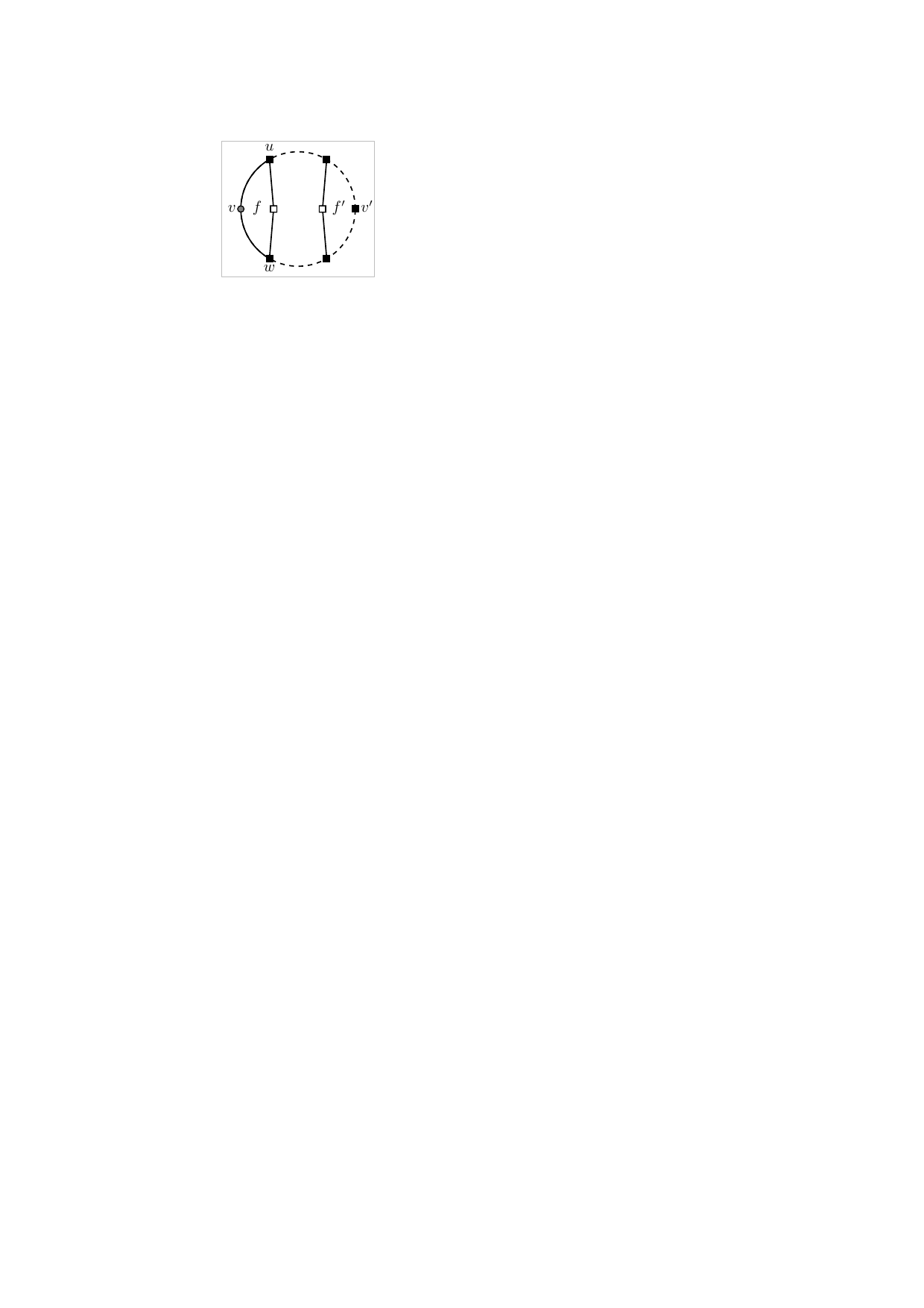}
			\caption{case~\ref{case:simple-both-visible}}
			\label{fig:planar-correctness-case1}
		\end{subfigure}
		\hfill
		\begin{subfigure}[t]{.32\linewidth}
			\centering \includegraphics[page=2]{planar-correctness}
			\caption{case~\ref{case:simple-one-visible}}
			\label{fig:planar-correctness-case2}
		\end{subfigure}
		\hfill
		\begin{subfigure}[t]{.32\linewidth}
			\centering \includegraphics[page=3]{planar-correctness}
			\caption{case~\ref{case:simple-none-visible}}
			\label{fig:planar-correctness-case3}
		\end{subfigure}
		\caption{Cases when there is no chord in~$G_i'$. We always find a good vertex.}
		\label{fig:planar-correctness}
	\end{figure}
	
	\begin{enumerate}[(C1)]%
		\item \label{case:simple-both-visible}
		Both $u$ and $w$ are visible.
		Then, consider a face $f' \in F$ different from~$f$,
		which exists by \cref{clm:FAtLeastTwo}.
		Clearly, all of its vertices are visible, and
		we can pick any free vertex~$v'$ of $f'$ 
		without violating the rules.
		\item \label{case:simple-one-visible}
		Exactly one of $\{u, w\}$ is visible.
		Without loss of generality,
                assume that $u$ is visible and $w$ is invisible.
		We claim that $u$ observes all rules.
		Since $w$ remains invisible after picking $u$,
		$u$ observes rule~\ref{rule:invisible}.
		If there was another half-chord incident with~$u$,
		either it would again be incident with $w$, which does
		not violate rule~\ref{rule:half-chord}, or it would be
		incident with another vertex of $G_i'$, which is visible.
		By \cref{clm:bothSidesF}, however,
		there is another face~$f' \in F$
		on the other side of that half-chord.
		As all of the vertices of~$f'$ on the outer face
		are visible, we would be in case~\ref{case:simple-both-visible} instead.
		\item \label{case:simple-none-visible}
		Both $u$ and $w$ are invisible.
		We claim that $u$ observes the rules.
		Similar to case~\ref{case:simple-one-visible},
		$u$ observes rule~\ref{rule:invisible} (since $w$ stays invisible)
		and rule~\ref{rule:half-chord}
		(if there was another half-chord incident with $u$
		whose other endpoint is visible, we would be in
		case~\ref{case:simple-both-visible} or in
		case~\ref{case:simple-one-visible}).
	\end{enumerate}
	
	Now assume that $G_i'$ has one or more chords.
	Of course, each of these chords has at most one visible endpoint.
	The chords with exactly one visible endpoint divide $G_i'$
	into several subgraphs.
	Observe that at least one of these subgraphs
	contains no such chord in its interior
	and is bounded by only one of them
	(or no chord has a visible endpoint, then there is only one subgraph, namely $G_i'$).
	We call this subgraph $\hat G_i'$
	and we let $u$ and $w$ denote the visible and invisible endpoints
	of the bounding chord~$e$, respectively
	(or if there is only one subgraph, then $u$ and $w$ are just neighbors).
	By a case distinction on the facets incident to $u$ and $w$,
	we can show that there is always a good vertex on the outer face of~$\hat G_i'$,
	and hence on the outer face of~$G_i'$.
	
	We claim that we can always find a good vertex in~$\hat G_i'$.
	Let $\hat F$ be the subset of~$F$ that is contained in $\hat G_i'$,
	which is non-empty by \cref{clm:bothSidesF}.
	If there is a face $f \in \hat F$ neither incident with $u$ nor $w$,
	then we can pick a free vertex $v$ of $f$;
	see \cref{fig:planar-correctness-chord-case1}.
	Since at least $u$
	stays invisible, $v$ observes rule~\ref{rule:invisible}.
	Trivially, the other rules are also observed.
	
	\begin{figure}[t]
		\centering
		\captionsetup[subfigure]{justification=centering}
		\begin{subfigure}[t]{.37\linewidth}
			\centering \includegraphics[page=1]{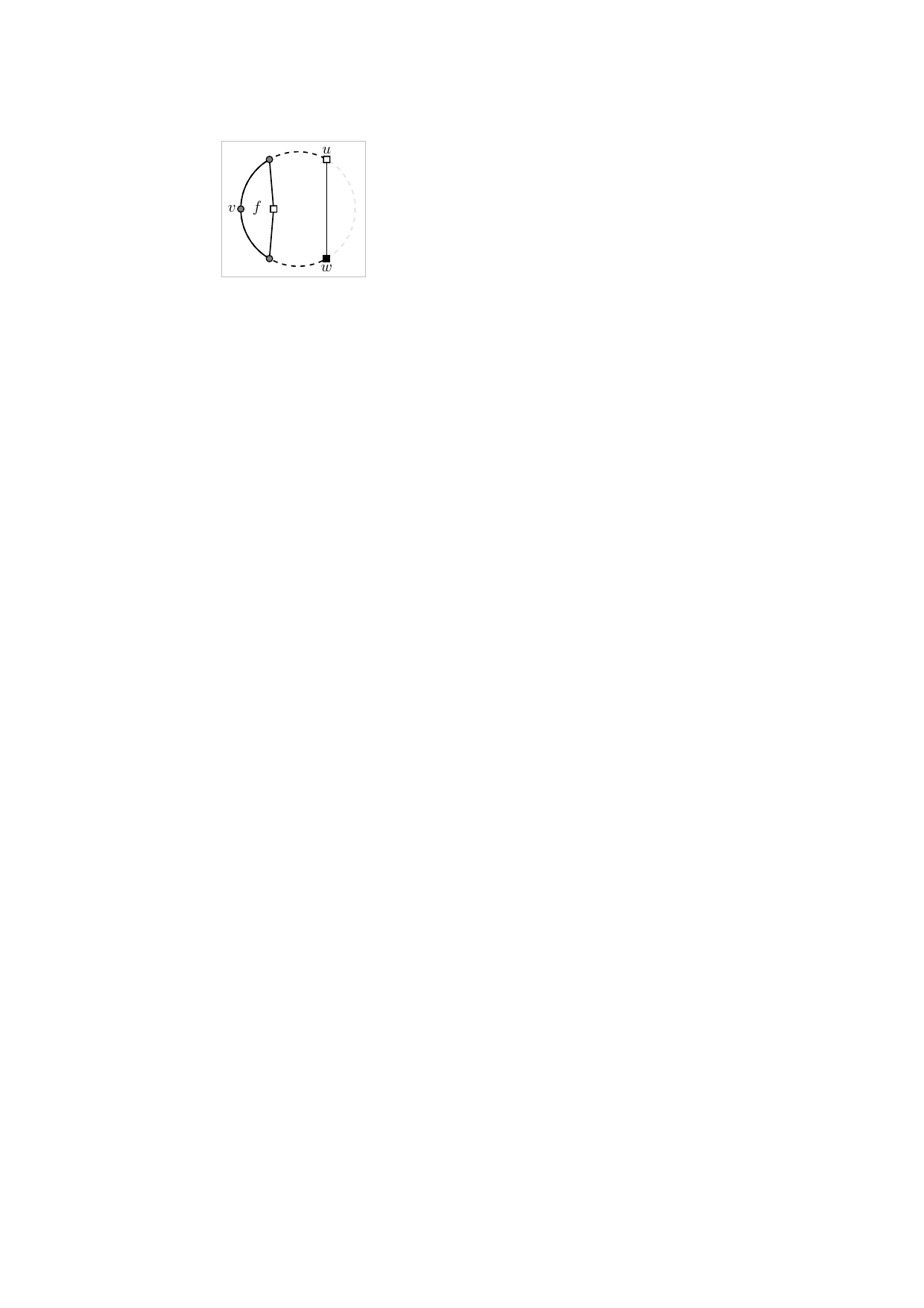}
			\caption{$f$ is not incident with $u$ or~$w$.}
			\label{fig:planar-correctness-chord-case1}
		\end{subfigure}
		\hfill
		\begin{subfigure}[t]{.3\linewidth}
			\centering \includegraphics[page=2]{planar-correctness-chord}
			\caption{$f$ is incident with $w$.}
			\label{fig:planar-correctness-chord-case2}
		\end{subfigure}
		\hfill
		\begin{subfigure}[t]{.3\linewidth}
			\centering \includegraphics[page=3]{planar-correctness-chord}
			\caption{$f$ is incident with $u$.}
			\label{fig:planar-correctness-chord-case3}
		\end{subfigure}
		\caption{Cases when there are chords in~$G_i'$. We always find a good vertex.}
		\label{fig:planar-correctness-chord}
	\end{figure}
	
	Otherwise all faces in $\hat F$ are incident either to $u$ or to $w$.
	Note that none of the faces in $\hat F$ can be incident with both $u$
	and $w$ as this would create a triangle.
	Assume that there is a face $f \in \hat F$ incident with a half-chord
	with endpoints $w$ and $w'$; see \cref{fig:planar-correctness-chord-case2}.
	We claim that we can pick a free vertex $v$ of $f$.
	Since at least $u$ stays invisible, $v$ observes rule~\ref{rule:invisible}.
	Rules~\ref{rule:chord} and~\ref{rule:half-chord} are trivially observed.
	Rule~\ref{rule:neighbor} is also observed, since $w'$ is not incident
	to a chord whose other endpoint is visible (as they are all inside $\hat G_i'$)
	and $w$ is not incident with a chord whose other endpoint is visible,
	otherwise invariant~\ref{inv:edges} would be violated.
	
	Finally, all faces in $\hat F$ are incident with~$u$.
	For a face $f \in \hat F$, let $u'$ be the other endpoint
	of the half-chord; see \cref{fig:planar-correctness-chord-case3}.
	Note that it might happen now that all free vertices of~$f$
	break rule~\ref{rule:invisible} or rule~\ref{rule:neighbor}.
	We claim, however, that we can always pick~$u'$.
	Because of $u$, $u'$ observes rule~\ref{rule:invisible}.
	Furthermore, $u'$ observes rule~\ref{rule:chord}
	and rule~\ref{rule:half-chord}
	as $u'$ cannot be incident with a chord or a half-chord.
	If~$u'$ was incident with a chord or a half-chord,
	this chord or half-chord would divide~$\hat G_i'$ into
	two parts and then there would be a face in $\hat F$ that is not incident with $u$
	due to \cref{clm:bothSidesF}.
	By the same argument, all chords and half-chords
	of $\hat G_i'$ are incident with~$u$.
	In order to show that $u'$ observes rule~\ref{rule:neighbor},
	note that the only vertex of $\hat G_i'$ that is incident with
	a chord whose other vertex is visible is $u$. Since $u'$ is
	adjacent neither to $u$ (otherwise there is a triangle) nor
	to any vertex that is not in $\hat G_i'$, $u'$ observes
	rule~\ref{rule:neighbor} as well and, hence, is a good vertex.
	
	We have shown that there is always a good vertex
	on the outer face of~$G_i'$ if the outer face of~$G_i'$
	is a simple cycle.
	Now assume that the outer face of $G_i'$ is not just a simple cycle, but
	consists of one or multiple cactus graphs.
	If we have multiple cactus graphs, we can consider them individually.
	So, it suffices to consider the case where the outer face of $G_i'$
	is one (connected) cactus graph.
	Still, $H_i$ may be disconnected.
	Let $C_1, C_2, \dots$ be the connected components of~$H_i$, and
	let $\tilde G_1, \tilde G_2, \dots$ be the corresponding subgraphs of $G_i'$.
	Two subgraphs $\tilde G_j$ and $\tilde G_k$
	may be connected by at most one common vertex or via a single edge.
	Otherwise, we consider them as non-connected
	(if they are connected by a path of length $\ge 2$ in $G_i$,
	they are independent because the neighborhood of $\tilde G_j$ does not overlap $\tilde G_k$ and vice versa; these parts remain as a forest in the end).
	Let~$T$ be a graph with a vertex for each $\tilde G_1, \tilde G_2, \dots$ where
	two vertices are adjacent if and only if the corresponding subgraphs are connected.
	Since the outer face of $G_i'$ is a cactus graph, $T$ is a forest.
	Consider the subgraph $\tilde G_1$ and use the algorithm above to
	find a good vertex $v$. If $v$ is a cut vertex, then check if
	it is also a good vertex in all subgraphs from $\{\tilde G_1, \tilde G_2, \dots\}$
	where it is contained as well.
	Further, check for each neighbor~$w$ of $v$ that is contained in a
	subgraph~$\tilde G_j$ distinct from~$\tilde G_1$ whether making $w$ visible
	violates one of the invariants (note that this is a weaker criterion than
	checking if $w$ is a good vertex and it implies that $w$
	and its neighbors in~$\tilde G_j$ are not good vertices).
	If there is a subgraph~$\tilde G_j$
	where picking $v$ breaks at least one rules
	(or making a neighbor of~$v$ visible breaks an invariant),
	then find a good vertex in $\tilde G_j$
	(recall that there exists at least one good vertex) and proceed in the same way.
	Since $T$ does not contain cycles, this procedure always terminates
	with a (globally) good vertex.
	
	Concerning the running time, note that, if we maintain the outer face,
	we can find, for each vertex, its incident chords and half-chords in linear time.
	Further, our constructive proof can be turned into a polynomial-time algorithm as it includes only graph traversal and
	graph construction operations that can be executed in polynomial time.
\end{proof}

\section{Open Problems}
\label{sec:open}

\begin{enumerate}
	\item It is interesting to study further parameterizations of \forestStory and
	\outerStory. Is there an \FPT\ algorithm with respect to the
	treewidth (pathwidth)?
	\item While we extended the existing planar storyplan problem into the
	direction of less powerful but easier-to-understand storyplans, one
	could also go into the opposite direction and investigate more
	powerful storyplans in order to be able to construct such storyplans
	for larger classes of graphs.
	For example, 1-planar storyplans would be a natural direction
        for future research.  It turns out, however, that it is \NP-hard to
        decide whether a given graph admits a 1-planar
        storyplan~\cite{s-1ps-BTh24}.
\end{enumerate}

\bibliographystyle{plainurl}
\bibliography{abbrv,storyplan} 

\end{document}